\DeclareMathOperator{\Span}{span}
\DeclareMathOperator{\Diag}{diag}
\DeclareMathOperator{\Vol}{vol}
\DeclareMathOperator{\Rank}{rank}
\DeclareMathOperator{\Trace}{Tr}
\DeclareMathOperator{\Dim}{dim}
\DeclareMathOperator{\Corr}{corr}
\newtheorem{Def}{\textbf{Definition}}
\newtheorem{Problem}{\textbf{Problem}}
\newtheorem{Theorem}{\textbf{Theorem}}
\newtheorem{Lemma}{\textbf{Lemma}}
\begin{document}

\title{The Volume-Correlation Subspace Detector}
\author{Hailong Shi, Hao Zhang and Xiqin Wang %
\thanks{Intelligent Sensing Lab, Department of Electronic Engineering, Tsinghua University, Beijing, E-mail: shl06@mails.tsinghua.edu.cn, haozhang@mail.tsinghua.edu.cn, wangxq\_ee@tsinghua.edu.cn.}}

\maketitle

\begin{abstract}
Detecting the presence of subspace signals with unknown clutter (or interference) is a widely known difficult problem encountered in various signal processing applications. Traditional methods fails to solve this problem because they require knowledge of clutter subspace, which has to be learned or estimated beforehand. In this paper, we propose a novel detector, named volume-correlation subspace detector, that can detect signal from clutter without any knowledge of clutter subspace. This detector effectively makes use of the hidden geometrical connection between the known target signal subspace to be detected and the subspace constructed from sampled data to ascertain the existence of target signal. It is derived based upon a mathematical tool, which basically calculates volume of parallelotope in high-dimensional linear space. Theoretical analysis show that while the proposed detector is detecting the known target signal, the unknown clutter signal can be explored and eliminated simultaneously. This advantage is called "detecting while learning", and implies perfect performance of this detector in the clutter environment. Numerical simulation validated our conclusion.
\end{abstract}
\begin{IEEEkeywords}
Subspace Signal Detection, Matched Subspace Detector, Volume-Correlation Subspace Detector
\end{IEEEkeywords}

\section{Introduction}

\par
We consider the following problem widely existing in communication, radar, sonar and other fields of signal detection and processing: How to effectively detect a target signal buried in clutter lying in an UNKNOWN low rank subspace and random noise. The noticeable point here is that the clutter subspace is unknown at the receiver. Without what is called training data, we could not sample directly in this clutter subspace and get enough information to eliminate its influence on target signal detection.

As a matter of fact, detecting target signal in certain signal subspace has been considered by several researchers and various schemes has been proposed.
Denote $\bm{H}$ by Hilbert space, which is the basic signal space, then the problem of detecting signal in subspace could be formulated roughly as follows.

\begin{Problem}\label{Prob1}
Suppose $\bm{H}_S\subset\bm{H}$ be the known target signal subspace with dimension $d$,
\begin{equation}
\bm{H}_S=\Span\{\bm s_1,\bm s_2,\cdots,\bm s_{d}\}
\end{equation}
where $\bm s_k\in\bm{H}, k=1,\cdots,d, d< n$ are KNOWN signal vectors. Given the sampled data $\bm y\in\bm{H}$ contaminated by Gaussian random noise, could we determine whether or not $\bm y$ lies in $\bm{H}_S$?
\end{Problem}

The well-known projection method has long been regarded as the basic step for these kinds of detectors \cite{KayV1,KayV2}. Indeed, to solve Problem \ref{Prob1}, we firstly project the sampled data $\bm y$ onto subspace $\bm H_S$, then we use the energy detector to make the decision. Actually, the conventional optimal detector for single known signal vector, i.e., the matched filter, is also a special kind of this projection based detector. Because the known signal vector indeed represents a one-dimensional subspace, and the 
matched filter can be regarded as essentially sequential projections of received data on the time-shift versions of target signal. On the other hand, least square estimation and its alternatives also involve projection of raw data on the subspace spanned by several prescribed signal vectors. The optimality of linear least square method as the linear estimator could be sufficiently guaranteed by Gauss-Markov theorem in statistical inference. It is clear that detection, being the counterpart of estimation, of signal in known subspace is closely contacted with projection operation in linear space. In fact, more complicated problem involving clutter or interference has been solved by projection-based methods, such problem is formulated as follows.

\begin{Problem}
In addition to $\bm{H}_S$, suppose $\bm{H}_C\subset\bm{H}$ be the KNOWN clutter subspace with dimension $m$.
\begin{equation}
\bm{H}_C=\Span\{\bm c_1,\bm c_2,\cdots,\bm c_m\}
\end{equation}
where $\bm c_k\in\bm{H}, k=1,\cdots,m,m<n$ are KNOWN clutter (interference) vectors. Given the sampled data $\bm y\in\bm{H}$, could we determine whether or not $\bm y$ lies in $\bm{H}_S\oplus\bm{H}_C$ but not entirely in $\bm{H}_C$ from the contamination of random noise? In other words, whether $\bm y$ satisfies
\begin{equation}
\bm y=\bm s+ \bm c + \bm w, 
\end{equation}
or
\begin{equation}
\bm y=\bm c + \bm w, 
\end{equation}
where $\bm s\in\bm{H}_S,\ \bm c\in\bm{H}_C,\ \bm s, \bm c\neq0$ and $\bm w$ is the Gaussian random noise.
\end{Problem}

Unlike Problem 1, Problem 2 can not be solved by the simple projection and energy detector mentioned above. The reason is that clutter subspace $\bm{H}_C$ is not necessarily orthogonal to signal subspace $\bm{H}_S$ in general (they are even alike in hostile environment). In practical scenario the power of clutter is often much higher than that of target signal. So it is hard to tell the existence of signal component from sample data $\bm y$ only based on the argument that the energy of its projection on $\bm{H}_S$ is relatively large. The most remarkable approach to the problem is the Matched Subspace Detector firstly proposed by L. Scharf et.al. \cite{scharf1994matched}. It is essentially a "two-folds" projection: Firstly, sampled data $\bm y$ was projected on orthogonal complement of $\bm{H}_C$ to eliminate the influence of clutter thoroughly; Secondly, the result of first projection was further projected onto the part of $\bm{H}_S$ that is orthogonal to $\bm{H}_C$, and then an energy detector was applied to infer the existence of signal component in $\bm y$. Denoting $\mathbb{P}_{\bm H}$ and $\mathbb{P}_{\bm H}^\perp$ by the projection operators on subspace $\bm{H}$ and its orthogonal complement $\bm{H}^\perp$ respectively, the matched subspace detector could be written as
\begin{equation}
t(\bm y) = \frac{1}{\sigma^2}\|\mathbb{P}_{\mathbb{P}_{\bm H_C}^{\perp}\bm{ H}_S}\mathbb{P}_{{\bm H}_C}^{\perp}\bm y\|_2^2,
\end{equation}

\par
Although tremendous variations and applications of matched subspace detector has appeared \cite{scharf1996adaptive}\cite{kraut1999cfar}\cite{kraut2001adaptive}\cite{desai2003robust}\cite{mcwhorter2003matched}, it should be noticed that the key precondition for the success of matched subspace detector is the clutter subspace $\bm{H}_C$ must be KNOWN beforehand. It is seldom satisfied in practice, especially in radar, reconnaissance, mobile communication and underwater signal processing. Hence the problem we encounter is actually like this:

\begin{Problem}\label{Pro3}
Let $\bm{H}_S$ be KNOWN target signal subspace and $\bm{H}_C$ be UNKNOWN clutter subspace, given the sampled data $\bm y\in\bm{H}$ contaminated by Guassian noise, if we assume $\dim(\bm H_S \bigcap \bm H_C)=0$, could we determine whether $\bm y$ contains signal that lies in $\bm{H}_S\oplus\bm{H}_C$ or entirely in $\bm{H}_C$? In other words, whether $\bm y$ satisfies
\begin{equation}
\bm y=\bm s+\bm c + \bm w,\label{signalplusclutter} 
\end{equation}
or
\begin{equation}
\bm y=\bm c + \bm w, 
\end{equation}
where $\bm s\in\bm{H}_S,\ \bm c\in\bm{H}_C,\ \bm s, \bm c\neq0$ and $\bm w$ is the Gaussian random noise.
\end{Problem}

Because of the unknownness of clutter subspace, projection-based detectors can not be constructed explicitly. However, the structure of clutter subspace could be explored by successively sampling in it to gain the information of the basis for subspace. Because the generic property of randomly sampling in linear space ensures the linear independence of sampled vectors, the clutter subspace could be 'reconstructed' by multiple samples. But it should be noted that, the information of target signal is mixed intimately with the clutter in the samples, like the case shown in (\ref{signalplusclutter}). In this case, it is impossible to separated the 'pure' clutter signal $\bm c$ from target signal $\bm s$ so that the basis for clutter subspace could be extracted alone. In other words, with multiple sampled vectors of $\bm y$ that satisfies the generic property, we cannot determine whether the sample subspace these $\bm y$ span is $\bm H_S\oplus \bm H_C$ or just $\bm H_C$, this issue is the core difficulty for detecting target signal against structured deterministic clutter. 

It is interesting to make a comparison of our problem to the problem of detection in random noise without clutter. Both two detection problems have similar formulations. In fact, there are two hypothesis, $H_0$ and $H_1$, which are
\begin{align}\label{SubspaceRel}
H_1:&\bm y = \bm s + \bm n, \nonumber\\
H_0:&\bm y = \bm n,
\end{align}
The only difference is that $\bm n$ in traditional detection problems appears to be only random noise and is described by a certain probability distribution. On the contrary, $\bm n$ in our problem includes both random noise and a vector lying within certain deterministic and unknown low-dimensional subspace. Detectors based on Likelihood Ratio Test have been proved to be optimal under the probabilistic assumption of $\bm n$ and projection is the natural consequence of likelihood ratio test in the case of Gaussian distributed noise. However, it can't be applied directly in our problem for the deterministic and unknown clutter subspace structure. Hence, more effective detector which could take fully advantage of the geometrical properties of subspace must be designed to overcome the obstacle we are facing.

In this paper a novel detector for target signal buried in structured low-dimensional clutter was given. The main idea of our detector is that the geometrical characteristic of sampled data could be utilized in solving detection problem. Here the volume, a common concept for geometrical objects, was suitably defined for basis for subspaces (more concretely, the parallelotope with its edges being the basis vectors of this subspace). It is intuitive that the 'volume' of basis for low-dimensional subspaces in high-dimensional linear space is zero. So the judgment of whether or not the sampled vectors span a subspace that contains the target signal could be transformed naturally to the calculation of 'volume' of a parallelotope built by sampled vectors together with basis for target subspace. If the 'volume' is zero, then the conclusion can be drawn that the sample subspace contains the target signal subspace, otherwise the target vector must lie outside the sample subspace. Thus the volume-based subspace detector, instead of projection-based ones, can be used to cope with the problem of detecting target signal lying in known subspace under clutter background with unknown subspace structure.

Throughout this paper, we use small bold letters $\bm x$ to denote vectors, capital bold letters $\bm X$ to denote matrices(or subspaces); we use $\|\bm X\|_2$ and $\|\bm x\|_2$ to denote the $\ell_2$ norm of the matrix $\bm X$ and vector $\bm x$, and $\|\bm X\|_F$ to denote the Frobenius norm of matrix $\bm X$. The $d$-dimensional identity matrix is denoted by $\bm I_d$. $\Span(\bm X)$ represents the linear subspace spanned by column vectors of the matrix $\bm X$. $\bm{H}_1\oplus\bm{H}_2$ denotes the direct sum of subspaces $\bm{H}_1$ and $\bm{H}_2$. In addition, $\mathbb{P}$ and $\mathbb{E}$ denotes the probability and expectation respectively.

\par
The remainder of this paper is organized as follows: Some preliminary backgrounds on the geometrical concepts for linear subspaces such as principal angles and volumes were summarized in section II. Then the Volume-based Subspace Detector was introduced and demonstrated in detail in section III. In section IV, theoretical analysis on the performance of our volume-based subspace detector was given. The potential application and future work on volume-based subspace detector was discussed in final section.

\section{Preliminary Background}

In this section, some important concepts of linear space geometry were reviewed concisely. Although these results are fundamental for a deep understanding of linear subspace, these concepts rarely appear in common textbooks of linear algebra. Only necessary material for our discussion was put forward for the space limitation. For details, please see \cite{absil2004riemannian} and reference therein.

\subsection{Principal Angles between Subspaces}
\par
The concept of principal angles \cite{miao1992principal} is the natural generalization of that of angles between two vectors. Principal angles can be used to formulate the relationship between two subspaces.

\begin{Def}
For two linear subspaces $\bm{H}_1$ and $\bm{H}_2$, with dimensions $\dim(\bm{H}_1)=d_1,\dim(\bm{H}_2)=d_2$. Take $m = \min(d_1,d_2)$, then the principal angles $0 \leq \theta_1 \leq \cdots \leq \theta_m \leq \pi/2$ between $\bm{H}_1$ and $\bm{H}_2$ are defined by
\begin{eqnarray}
\cos \theta_i := \bm u_i^T \bm v_i = \max_{\bm u \in \bm{H}_1,\bm v \in \bm{H}_2}\left\{\bm u^T \bm v:\quad \begin{subarray}{c}\displaystyle{\|\bm u\|=1, \bm u^T \bm u_j=0}\\
							\displaystyle{\|\bm v\|=1,\bm v^T \bm v_j=0}
		\end{subarray},\quad j = 1,\cdots,i-1\right\} , \nonumber 
\end{eqnarray}
where $i =1, 2,\cdots m$.
\end{Def}

\par
As an important concept of linear space geometry, the principal angles are widely applied in scientific and engineering fields. For instance, the geodesic distance which is the key metric measure on Grassmann manifold, as well as numerous kinds of distance measures, is defined using the principal angles \cite{absil2004riemannian}\cite{qiu2005unitarily}\cite{edelman1998geometry}, such as

\begin{itemize}
\item Chordal Distance or Projection distance
$$
d_{proj}:= \left( \sum_{j=1}^{\min(d_1,d_2)}\sin^2 \theta_j(\bm{H}_1, \bm{H}_2) \right)^{1/2},
$$
\item Binet-Cauchy Distance
$$
d_{BC} := \left(1- \prod_{j=1}^{\min(d_1,d_2)} \cos^2 \theta_j(\bm{H}_1, \bm{H}_2)\right)^{1/2},
$$
\item Procrustes Distance
$$
d_{proc}:= 2 \left( \sum_{j=1}^{\min(d_1,d_2)}  \sin^2 \frac{\theta_j(\bm{H}_1, \bm{H}_2)}{2} \right)^{1/2}.
$$
\end{itemize}

Moreover, the volume of subspace which was used in this paper to construct our subspace detector is also closely related to the principal angles.

\subsection{The volume of a matrix}

\par
The definition of volume for certain geometrical object is ambiguous without the stipulation of dimension. For example, the volume of a parallelogram on a plane is the absolute value of cross product of its two adjacent sides. This is its two-dimensional volume. On the contrary, when the parallelogram is regarded as the three-dimensional body, its three-dimensional volume is definitely zero. It means that the volume value of an object depends on the dimension of space it lies in. For a square matrix
$\bm{X}\in\mathbb{R}^{n\times{n}}$, the $n$-dimensional volume of the parallelotope spanned by column vectors of $\bm{X}$ is well-known to be the absolute value of determinant of $\bm{X}$, which is the product of all the eigenvalues of $\bm{X}$. When $\bm{X}$ is rectangular, the concept of volume could be generalized naturally. Suppose a matrix $\bm{X}\in\mathbb{R}^{n\times{d}}$ with $d$ column vectors, and $d<n$, its $d$-dimensional volume is defined as \cite{ben1992volume}

\begin{equation}\label{VolumeDef1}
\Vol_d(\bm X):= \prod_{i=1}^d \sigma_i,
\end{equation}
where $\sigma_1 \geq \sigma_2 \geq \cdots \geq \sigma_d \geq 0$ are the singular values of $\bm{X}$. If $\bm X$ is of full column rank, its  $d$-dimensional volume can be written equivalently as \cite{ben1992volume}\cite{miao1992principal}
\begin{equation}\label{VolumeDef2}
\Vol_d (\bm X) = \sqrt{\det(\bm{X^TX})}.
\end{equation}

The following simple lemma is widely useful in application of volume for subspaces. It means that the $d$-dimensional volume of a matrix with a rank less than $d$ is definitely zero.

\begin{Lemma}\label{lemma2}
Suppose $\bm{X}^{(m)}=\{\bm{x}_1,\cdots,\bm{x}_m\}, m > 1$ be a group of vectors in Hilbert space $\bm H$ and $\dim(\Span{\bm{X}^{(m)}})=i$, then
\begin{equation}
\Vol_{d}(\bm{X}^{(m)})=0\quad \Longleftrightarrow \quad{i<d}.
\end{equation}
\end{Lemma}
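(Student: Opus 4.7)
The plan is to prove the lemma by passing directly to the singular-value description of the volume, since the hypothesis $\dim(\Span \bm X^{(m)}) = i$ is exactly a statement about the rank of the matrix whose columns are $\bm x_1,\dots,\bm x_m$, and rank is encoded cleanly by the singular values. First I would fix notation: let $\bm X \in \mathbb{R}^{n \times m}$ be the matrix with columns $\bm x_1,\dots,\bm x_m$, and let $\sigma_1 \geq \sigma_2 \geq \cdots \geq \sigma_{\min(n,m)} \geq 0$ be its singular values. I would adopt the definition $\Vol_d(\bm X) := \prod_{j=1}^d \sigma_j$, with the convention that any $\sigma_j$ with index exceeding the number of available singular values is taken to be zero; this extends (\ref{VolumeDef1}) consistently to the setting of the lemma, where the column count $m$ need not equal the superscript $d$.

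Next I would invoke the standard fact that the rank of a matrix equals the number of its strictly positive singular values. Since $\Rank(\bm X) = \dim(\Span \bm X^{(m)}) = i$, this gives $\sigma_1 \geq \cdots \geq \sigma_i > 0$ and $\sigma_{i+1} = \sigma_{i+2} = \cdots = 0$. From here the equivalence is immediate by monotonicity of the ordered singular values:
\begin{equation}
\Vol_d(\bm X^{(m)}) = \prod_{j=1}^d \sigma_j = 0 \;\Longleftrightarrow\; \sigma_d = 0 \;\Longleftrightarrow\; d > i,
\end{equation}
which is the claimed equivalence $i < d$. The forward direction uses that a product of nonnegative reals is zero iff some factor is zero, combined with the ordering $\sigma_1 \geq \cdots \geq \sigma_d$ to single out $\sigma_d$ as the decisive factor; the reverse direction is the trivial observation that $\sigma_d > 0$ forces every earlier $\sigma_j$ to be positive as well.

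The only genuine subtlety, and the step I would write most carefully, is the degenerate regime $m < d$: then $\bm X$ has fewer than $d$ singular values and one must justify that $\Vol_d(\bm X^{(m)}) = 0$ by convention (or equivalently by the alternative formula $\sqrt{\det(\bm X^T \bm X)}$ after zero-padding $\bm X$ to an $n \times d$ matrix, whose Gram matrix is manifestly rank-deficient). In this case the hypothesis gives $i \leq m < d$, so $i < d$ also holds, keeping the equivalence intact. Everything else is routine linear algebra, so I would not expect any technical obstruction beyond making this boundary convention explicit.
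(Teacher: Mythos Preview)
The paper does not actually give a proof of this lemma; it is stated as a ``simple'' fact immediately after the definition of volume and left unproved. Your argument via the singular-value characterization of rank is correct and is precisely the one-line justification the paper implicitly relies on. One minor remark: in the paper the subscript $d$ in $\Vol_d$ is always taken to match the number of columns of the matrix (see the definition in~(\ref{VolumeDef1}) and every subsequent use), so the lemma is really only invoked with $d=m$; the boundary case $m<d$ you carefully address never arises in the paper and need not be justified for its purposes, though your treatment of it is sound.
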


The $d$-dimensional volume provides a kind of measure for separation between two linear subspaces. 
For the $n$-dimensional Hilbert space $\bm{H}$ and its two subspaces $\bm{H}_1$ and $\bm{H}_2$ with dimensions $\dim(\bm{H}_1)=d_1,\dim(\bm{H}_2)=d_2$, denote their bases matrices by $\bm X_1$ and $\bm X_2$, we define the \textit{volume-based correlation} as
\begin{equation}\label{CorrVol}
\Corr_{\textrm{vol}}(\bm{H}_1,\bm{H}_2)=\frac{\Vol_{d_1+d_2}([\bm{X}_1,\bm{X}_2])}{\Vol_{d_1}(\bm{X}_1)\Vol_{d_2}(\bm{X}_2)},
\end{equation}
where $[\bm{X}_1,\bm{X}_2]$ means putting columns of matrices $\bm X_1$ and $\bm X_2$ together.

The volume-based correlation is closely related to the principal angles between subspaces, i.e., according to \cite{miao1992principal},
\begin{equation}\label{VolAng}
\Corr_{\textrm{vol}}(\bm{H}_1,\bm{H}_2) = \prod_{j=1}^{\min(d_1,d_2)}\sin\theta_j(\bm{H}_1, \bm{H}_2),
\end{equation}
where $0\leq\theta_j(\bm{H}_1, \bm{H}_2)\leq2\pi,1\leq{j}\leq\min(d_1,d_2)$ are the principal angles of subspace $\bm{H}_1$ and $\bm{H}_2$.

\par
It can be seen intuitively from (\ref{VolAng}) that the volume-based correlation $\Corr_{\textrm{vol}}(\bm{H}_1,\bm{H}_2)$ can actually play the role of distance measure between subspaces $\bm{H}_1$ and $\bm{H}_2$. When $\bm{H}_1$ and $\bm{H}_2$ have vectors in common, i.e., $\Dim(\bm{H}_1\bigcap\bm{H}_2)\geq1$, we have $\Corr_{\textrm{vol}}(\bm{H}_1,\bm{H}_2)=0$. On the other side, when $\bm{H}_1$ is orthogonal to $\bm{H}_2$, we have $\Vol_{d_1+d_2}([\bm X_1, \bm X_2])=\Vol_{d_1}(\bm X_1)\Vol_{d_2}(\bm X_2)$, in other words, $\Corr_{\textrm{vol}}(\bm{H}_1,\bm{H}_2)=1$. Although volume-based correlation and the conventional correlation in statistics satisfy Cauchy-Schwarz inequality alike, they are essentially different, because the volume-based correlation isn't an inner product operation induced from some kind of distance in linear space of subspaces (more formally, Grassmann manifold). But it does not matter for the following discussion. The volume-based correlation will be seen as a generalized distance measure for convenience that plays a key role in our proposed subspace detector.

\section{The Volume-based Correlation Subspace Detector without random noise}

In order to fully convey the geometrical intuition about our subspace detector, in this section, we temporarily assume the noise component is not present, i,e., $\bm w=0$ in Problem 3. We introduce a novel detector called volume-based correlation subspace detector, or VC subspace detector for short, that can detect subspace signals buried in unknown, but usually high-power clutter. 
The main characteristic of the detector is that, it can exploit the geometric relation between the subspaces extracted from the sampled data and the target signal subspace, then it will eliminate the influence of clutter subspace gradually through the process of target detection.

\subsection{Main Idea}

As we have mentioned, the unknown clutter with an unknown subspace structure is the primary obstacle for efficient detection of target signal. To reach the purposes, the designers of detector must find a way to clarify the intrinsic construction of clutter subspace. Just as most of the traditional approaches for background learning, in our method, multiple samples are used to explore the clutter subspace. The following observation is the inspiration about the exploration of clutter subspace.

\begin{itemize}
  \item Suppose $\bm H$ be a $n$-dimensional Hilbert space, and $\bm{x}_1,\cdots,\bm{x}_k, k<n$ be randomly sampled vectors in $\bm H$, then in the generic situation, we have
      \begin{equation}
        \Dim(\Span\{\bm{x}_1,\cdots,\bm{x}_k\})=k,
      \end{equation}
      In other words, $\bm{x}_1,\cdots,\bm{x}_k$ are linearly independent.
  \item In the case of $k\geq{n}$, then in the generic situation, we have
      \begin{equation}
        \Dim(\Span\{\bm{x}_1,\cdots,\bm{x}_k\})=n,
      \end{equation}
      In a word, $\bm{x}_1,\cdots,\bm{x}_k$ are linearly dependent, but $\bm{x}_1,\cdots,\bm{x}_k$ span the whole space $\bm H$.
\end{itemize}

Let $\bm{H}_C$ be unknown clutter subspace with unknown dimension $d_1$, $\bm{H}_S$ be known target subspace with dimension $d_2$, we assume
$\Dim(\bm{H}_S\bigcap\bm{H}_C)=0$ throughout this paper. Denote $\bm{y}_1,\bm{y}_{2},\cdots,$ by the randomly sampled data satisfying the generic property mentioned above. The critical issue must be concerned with is that the sampled data might contain both clutter and target components in general, that is,
\begin{equation}
  \bm{y}_i = \bm{s}_i + \bm{c}_i, \quad{\bm{s}_i\in\bm{H}_S,\ \bm{c}_i\in\bm{H}_C},\ i=1,2,\cdots,
\end{equation}
Only with these $\bm y_i$, it is impossible to separate the clutter and target signal apart. According to the generic property of random sampling, we have
\begin{equation}
  \Dim(\Span\{\bm{y}_1,\cdots,\bm{y}_{d_1}\})=d_1.
\end{equation}
Even if the dimension $d_1$ was given virtually, the sample subspace $\Span\{\bm{y}_1,\cdots,\bm{y}_{d_1}\}$ still could not be regarded as the clutter subspace  $\bm{H}_C$, for the existence of target signal component $\bm s_i$. How could the sampled data be mined effectively to get knowledge of the clutter subspace?

We will show step by step that, the volume-based correlation between subspaces is helpful for us to eliminate the impact of mixing of clutter and target signal. 

Let $\bar{\bm s}_1,\cdots,\bar{\bm s}_{d_2}$ be the known basis vectors of $\bm{H}_S$. 
It has been mentioned that in the generic scenario of random sampling, different $\bm{y}_i$ sampled from $\bm H_S \oplus \bm H_C$ are linearly independent. In other words, innovative directions of basis vectors in $\bm{H}_S\oplus \bm H_C$ are revealed continually along with the sampling process. Combined with the known basis for signal subspace $\bm H_S$, we have
\begin{align}
  &\Dim(\Span\{\bm{y}_1,\bar{\bm s}_1,\cdots,\bar{\bm s}_{d_2}\})=1+d_2,\nonumber\\
  &\Dim(\Span\{\bm{y}_1,\bm{y}_2,\bar{\bm s}_1,\cdots,\bar{\bm s}_{d_2}\})=2+d_2,\nonumber\\
  &\cdots\cdots\nonumber\\
  &\Dim(\Span\{\bm{y}_1,\cdots,\bm{y}_{d_1},\bar{\bm s}_1,\cdots,\bar{\bm s}_{d_2}\})=d_1+d_2,
\end{align}
When we already have $d_1$ samples, actually $\bm{H}_C$ has been explored by successive sampling thoroughly, in other words, the projection of sample subspace $\Span\{\bm{y}_1,\cdots,\bm{y}_{d_1}\}$ on $\bm{H}_C$ has already constitute a complete basis for $\bm{H}_C$. The interesting point is, further sampling can not modify the intrinsic dimension of the subspace constituted by both the sampled data and basis for signal subspace, that is,
\begin{equation}
  \Dim(\Span\{\bm{y}_1,\cdots,\bm{y}_{k},\bar{\bm s}_1,\cdots,\bar{\bm s}_{d_2}\})=d_1+d_2,\quad{\text{for } k>d_1}
\end{equation}

The whole process about change of dimensions could be illustrated more clearly from the viewpoint of volume. 
In particular, the inspection of volume is the core idea of our proposed detector. 

Firstly, when both the signal and clutter are present, i.e., $\bm y_i \in \bm H_S\oplus \bm H_C$ we have
\begin{align}
&\Vol_{1+d_2}([\bm{y}_1,\bar{\bm s}_1,\cdots,\bar{\bm s}_{d_2}])>0,\nonumber\\
&\Vol_{2+d_2}([\bm{y}_1,\bm{y}_2,\bar{\bm s}_1,\cdots,\bar{\bm s}_{d_2}])>0,\nonumber\\
&\cdots\cdots\nonumber\\
&\Vol_{d_1+d_2}([\bm{y}_1,\cdots,\bm{y}_{d_1},\bar{\bm s}_1,\cdots,\bar{\bm s}_{d_2}])>0,
\end{align}
the magic will happen for the next dimension, i.e., when there are $d_1+d_2+1$ sample vectors, there will be
\begin{equation}
\Vol_{d_1+d_2+1}([\bm{y}_1,\cdots,\bm{y}_{d_1},\bm{y}_{d_1+1},\bar{\bm s}_1,\cdots,\bar{\bm s}_{d_2}])=0,\label{volscc}
\end{equation}
while on the other hand,
\begin{equation}
\Vol_{d_1+1}(\Span\{\bm{y}_1,\cdots,\bm{y}_{d_1},\bm{y}_{d_1+1}\})>0.\label{volsc}
\end{equation}
The reason is because, $d_1+d_2+1$ sample vectors in this scenario have not spanned the entire subspace $\bm H_S \oplus \bm H_C$ according to the previous statement of randomly sampling; but $\bm{y}_1,\cdots,\bm{y}_{d_1},\bm{y}_{d_1+1},\bar{\bm s}_1,\cdots,\bar{\bm s}_{d_2}$ can span $\bm H_S \oplus \bm H_C$, in another word,
\begin{eqnarray}
\Dim(\Span\{\bm{y}_1,\cdots,\bm{y}_{d_1},\bm{y}_{d_1+1},\bar{\bm s}_1,\cdots,\bar{\bm s}_{d_2}\})
=\Dim(\bm{H}_S\oplus\bm{H}_C)=d_1+d_2, \hspace{2cm}
\end{eqnarray}
On the other hand, if the sample data only contains pure clutter, i.e., $\bm y_i \in \bm H_C$, we obtain
\begin{equation}
\Vol_{d_1+d_2+1}([\bm{y}_1,\cdots,\bm{y}_{d_1},\bm{y}_{d_1+1},\bar{\bm s}_1,\cdots,\bar{\bm s}_{d_2}])=0,\label{volcc}
\end{equation}
and
\begin{equation}
\Vol_{d_1+1}([\bm{y}_1,\cdots,\bm{y}_{d_1},\bm{y}_{d_1+1}])=0.\label{volc}
\end{equation}
(\ref{volscc}), (\ref{volsc}), (\ref{volcc}) and (\ref{volc}) indicates that, $d_1+1$ is the critical number of samples for detection of target signal in the background of clutter with unknown subspace structure, i.e., the "breakpoint". The knack of detection in this noiseless situation is, sampling continually, computing the volume of parallelotope spanned by all the sample vectors and known basis of target subspace at various dimensions and inspect the change of results. Once the volume vanishes, it means the number of samples reaches the critical point. Then the process of sampling should be stopped and the volume of sample vector themselves is calculated. The decision can be made based on whether the result is zero, i.e., whether (\ref{volsc}) or (\ref{volc}).

\subsection{The Volume-based Correlation Subspace Detector on raw data}

Following the above thinking, an informal formulation of our detector in noiseless scenario, can be given below.

\textbf{Detector 0.}
\vspace{0.1cm}
\hrule
\vspace{0.1cm}

\begin{itemize}

\item Initial Step :

Obtain $\{\bar{\bm s}_1,\cdots,\bar{\bm s}_{d_2}\}$ as the basis vectors of signal subspace $\bm H_S$. Let the initial matrix of sample data be $\bm{Y}^{(0)}=\{0\}$. Index $i$ is set to $1$. Set two thresholds $T_1$ and $T_2$ at appropriate values.

\item Step 1 :

Get the new sample $\bm{y}_{i}$, let $\bm{Y}^{(i)}=[\bm{Y}^{(i-1)}, \bm{y}_{i}]$, then compute
\begin{equation}
  V^{(1)}(\bm{Y}^{(i)})=\Vol_{i+d_2}([\bm{Y}^{(i)},\bar{\bm s}_1,\cdots,\bar{\bm s}_{d_2}]),
\end{equation}

build the test quantity as
\begin{equation}\label{detector1}
  T^{(1)}(\bm{Y}^{(i)})=\frac{1}{V^{(1)}(\bm{Y}^{(i)})}, 
\end{equation}
if $T^{(1)}(\bm{Y}^{(i)})>T_1$, goto Step 2; else let $i=i+1$, goto Step 1,

\item Step 2 :

Compute
\begin{equation}
  V^{(2)}(\bm{Y}^{(i)})= 
  	\Vol_{i}([\bm{Y}^{(i-1)},\bm{y}_i]) 
\end{equation}

build the test quantity as
\begin{equation}
  T^{(2)}(\bm{Y}^{(i)})=\frac{1}{V^{(2)}(\bm{Y}^{(i)})},\label{idealtest2}
\end{equation}
if $T^{(2)}(\bm{Y}^{(i)})>T_1$, then concludes the non-existence of target signal; otherwise the conclusion is converse.

\end{itemize}
\hrule
\vspace{0.3cm}

It should be noted that the decision on existence of target can't be drawn by only examining the test quantity in (\ref{detector1}), because the volume of $\bm{Y}^{(i)}$ will also be zero when $\bm{Y}^{(i)}\subset\bm{H}_C$ and $i>\Dim(\bm{H}_C)$, leading naturally to zero value of $\Vol_{i+d_2}([\bm{Y}^{(i)},\bar{\bm s}_1,\cdots,\bar{\bm s}_{d_2}])$. That is to say, $T^{(1)}(\bm{Y}^{(i)})$ will become zero even when the target doesn't exist, because the dimension $i$ of matrix $\bm{Y}^{(i)}$ is larger than the dimension of clutter subspace. So additionally checking the volume of sample subspace $\bm{Y}^{(i)}$ is definitely necessary.

As we have said, when there is no target signal, we have 
\begin{align}
&V^{(1)}(\bm{Y}^{(i)})=0,\\
&V^{(2)}(\bm{Y}^{(i)})=0,
\end{align}
for $i\geq d_1 + 1$, but
on the other hand, when target does exist, we have
\begin{align}
&V^{(1)}(\bm{Y}^{(i)})=0,\\
&V^{(2)}(\bm{Y}^{(i)})>0,
\end{align}
if $d_1 + 1 \leq i\leq d_1 + d_2$, and $V^{(2)}(\bm{Y}^{(i)})$ will become zero only when $i \geq d_1 + d_2 +1$. So as a whole, when only clutter exists, both $V^{(1)}(\bm{Y}^{(i)})$ and $V^{(2)}(\bm{Y}^{(i)})$ will vanish simultaneously for the same $i$. While if target exists, they will become zero one after another as the increase of $i$. This is the essential sign of presence of target and the key point in our detector, and that's the reason why the detector relies on a joint test of (\ref{detector1}) and (\ref{idealtest2}).

\par

\subsection{The Volume-based Correlation Subspace Detector on orthogonalized data}

The fore-mentioned detector is impractical,
because there will be numerical stability problem 
when we are calculating volume of matrices with large dimensions in practical situation. Hence the procedure of Orthogonalization will be introduced into our detector. The advantages of orthogonalization include reducing the procedure of volume calculation and threshold testing from two steps to one, and improving the numerical stability dramatically.

Specifically, let $\bm{Q}_S$ be the matrix whose columns are the orthonormal basis for target signal subspace $\bm{H}_S$, which could be obtained off-line. The sample data $\bm{Y}^{(m)}=[\bm{y}_1,\bm y_2,\cdots,\bm y_m]$ taken from the subspace $\bm{H}_S\oplus\bm{H}_C$ (or $\bm H_C$) could be orthogonalized and normalized. We denote the result as matrix $\bm{Q}_{\bm{Y}}^{(m)}$. Then the test quantity (\ref{detector1}) in VC subspace detector, that is, the volume correlation between subspaces $\bm{H}_S$ and $\Span(\bm{Y}^{(m)})$ can be written as
\begin{equation}\label{testorth}
   T^{(1)}(\bm{Y}^{(m)})=\frac{1}{\Vol_{m+d_2}([\bm{Q}_{\bm{Y}}^{(m)},\ \bm{Q}_S])},
\end{equation}
because of the fact that
\begin{equation}
  \Vol_{m}(\bm{Q}_{\bm{Y}}^{(m)})=1,\qquad\Vol_{d_2}(\bm{Q}_S)=1,
\end{equation}
we have
\begin{equation}
   T^{(1)}(\bm{Y}^{(m)})=\frac{1}{\Corr_{\textrm{vol}}(\Span(\bm{Y}^{(m)}),\bm{H}_S)}.
\end{equation}
The matrix $\bm{Q}_S$ in (\ref{testorth}) can be prepared in advance. and matrix $\bm{Q}_{\bm{Y}}^{(m)}$ can be generated recursively as
\begin{equation}
  \bm{Q}_{\bm{Y}}^{(m)}=\left[\bm{Q}_{\bm{Y}}^{(m-1)},\
  \frac{(\bm{I}-\bm{Q}_{\bm{Y}}^{(m-1)}(\bm{Q}_{\bm{Y}}^{(m-1)})^T)\bm{y}_m}{\|(\bm{I}-\bm{Q}_{\bm{Y}}^{(m-1)}(\bm{Q}_{\bm{Y}}^{(m-1)})^T)\bm{y}_m\|}\right].
\end{equation}

It should be noted that the two-step test in our previous VC subspace detector could be reduced to just one with the help of orthogonalization. In fact, if there is no target signal in received data, we have
\begin{equation}
  \Vol_1(\bm{Q}_{\bm{Y}}^{(1)})>0,\ \Vol_2(\bm{Q}_{\bm{Y}}^{(2)})>0,\ \cdots,\Vol_{d_1}(\bm{Q}_{\bm{Y}}^{(d_1)})>0.
\end{equation}
Contrast to the case without orthogonalization, the volume of $\bm{Q}_{\bm{Y}}^{(m)}$ will not become zero when $m>d_1$. Because until now, the innovative vector $\bm{y}_m$ could be linearly expressed by columns of $\bm{Q}_{\bm{Y}}^{(m)}$. So if
$$
(\bm{I}-\bm{Q}_{\bm{Y}}^{(m-1)}(\bm{Q}_{\bm{Y}}^{(m-1)})^T)\bm{y}_m=\bm 0,
$$
we have
\begin{equation}
\bm{Q}_{\bm{Y}}^{(m+1)}=\bm{Q}_{\bm{Y}}^{(m)} \text{ and } \Rank(\bm{Q}_{\bm{Y}}^{(m)}) = d_1,\qquad{m\geq{d_1}},
\end{equation}
hence
\begin{equation}
\Vol_{d_1}(\bm{Q}_{\bm{Y}}^{(m+1)})=\Vol_{d_1}(\bm{Q}_{\bm{Y}}^{(m)})=1,\qquad{\text{for }m\geq{d_1}}.
\end{equation}

Similarly, when there exists target signal, we also have 
\begin{equation}
\Vol_{d_1 + d_2}(\bm{Q}_{\bm{Y}}^{(m+1)})=\Vol_{d_1 + d_2}(\bm{Q}_{\bm{Y}}^{(m)})=1,\qquad{\text{for }m\geq{d_1+d_2}}.
\end{equation}
As we can see, the volume of $\bm{Q}_{\bm{Y}}^{(m)}$ will never become zero.

When we are considering the other test quantity, according to (\ref{VolAng}), when there is no target signal, we have
\begin{equation}
\Vol_{d_1+d_2}([\bm{Q}_{\bm{Y}}^{(m)},\ \bm{Q}_S])=\prod_{j=1}^{\min(d_1,d_2)}\sin\theta_j(\bm{H}_C, \bm{H}_S)>0, \text{ for } m \geq d_1,
\end{equation}
because $\dim(\bm H_S \bigcap \bm H_C)=0$. 

On the other hand, when there exists target signal,
according to the analysis in subsection A, it is obvious that 
\begin{equation}
  \Vol_{d_1+d_2}([\bm{Q}_{\bm{Y}}^{(m)},\ \bm{Q}_S])=0,\text{ for } m \geq d_1, 
\end{equation}
because $\Span(\bm{Y}^{(m)})\cap\bm{H}_S\neq\{0\}$.
As a result, the detector only needs a test on (\ref{testorth}), because it is not possible that the volume of $\bm{Q}_{\bm{Y}}^{(m)}$ equals zero whether or not the target presents in received data, 
orthogonalization eliminates completely the possibility of rank deficiency of matrix $\bm{Q}_{\bm{Y}}^{(m)}$. There is no need to check the value of $\Vol_m(\bm{Y}^{(m)})$ alone in VC subspace detector at all.

The detector could be adapted as follows,

\textbf{Detector 1.}
\vspace{0.1cm}
\hrule
\vspace{0.1cm}

\begin{itemize}

\item Initial Step :

Obtain $\{\bar{\bm s}_1,\cdots,\bar{\bm s}_{d_2}\}$ as the orthonormal basis vectors of target subspace and denote it by $\bm{Q}_S$. Let the initial matrix of sample data be $\bm{Y}^{(0)}=\{0\}$. Index $i$ is set to $1$. Set two threshold $T$ and $\epsilon$ at appropriate values.

\item Step 1 :

Get the new sample $\bm{y}_{i}$, let
\begin{equation*}
  \bm{Q}_{\bm{Y}}^{(i)}=\left[\bm{Q}_{\bm{Y}}^{(i-1)},\
  \frac{(\bm{I}-\bm{Q}_{\bm{Y}}^{(i-1)}(\bm{Q}_{\bm{Y}}^{(i-1)})^T)\bm{y}_i}{\|(\bm{I}-\bm{Q}_{\bm{Y}}^{(i-1)}(\bm{Q}_{\bm{Y}}^{(i-1)})^T)\bm{y}_i\|}\right],
\end{equation*}
for $\bm{y}_i\notin\Span(\bm{Q}_{\bm{Y}}^{(i-1)})$; while we let $\bm{Q}_{\bm{Y}}^{(i)}=\bm{Q}_{\bm{Y}}^{(i-1)}$ otherwise.

Then we compute the test quantity as
\begin{equation}\label{detector2}
  T(\bm{Y}^{(i)})=\frac{1}{\Vol_{i+d_2}([\bm{Q}_{\bm{Y}}^{(i)}, \bm{Q}_S])},
\end{equation}
if $T(\bm{Y}^{(i)})>T$, concludes the existence of target signal and exits;

\item Step 2 :

if $|T(\bm{Y}^{(i)})-T(\bm{Y}^{(i-1)})|>\epsilon$, then set $i=i+1$ and go to step 1; otherwise concludes the non-existence of target signal and exits.

\end{itemize}
\hrule
\vspace{0.3cm}
There are some further remarks deserve being mentioned explicitly.

\begin{itemize}
	
	\item Remark 1. The core of the proposed noiseless VC subspace detector is a volume-based test, the final decision is made based on the test of (\ref{detector2}), which is actually the reciprocal of the volume correlation between the target signal subspace and the sample subspace. According to the previous analysis, when target exists in sampled data, $T(\bm{Y}^{(i)})$ will reach infinity for $i \geq d_1$, while on the other hand, $T(\bm{Y}^{(i)})$ will remains a finite value if only clutter exists. The breakpoint of dimension for volume computation could be discovered AUTOMATICALLY, which is the indication of unknown dimension of clutter subspace. The reasons can be described briefly as that results of volume-based correlations is independent of the intrinsic structure of subspaces, and depends only on the dimensions and mutual relationship of the subspaces. So VC subspace detector focuses on the evolution of values of volume-based correlations along with increase of dimensions for volume computation only, regardless of the basis structure of clutter subspace. Concerning with the unknown characteristic of clutter subspace, VC subspace detector could be listed among the blind detecting methods. It is important to note that although matched subspace detector also detect signals from clutters, these two detection methods are essentially different. Because the prerequisite for matched subspace detector includes the detailed information on clutter subspace. However, it is needless for VC subspace detector.
	
	\item Remark 2. It should be emphasized that the most remarkable advantage of VC subspace detector is its feature of "Detecting while Learning". To be specific, the detection could be completed without separated sessions for background learning with VC subspace detector. As is well known, background learning is very popular in adaptive processing for radar, communication and other signal processing problems. Channel equalization in communication transmission, CFAR (Constant False Alarm Rate) operation in radar detection and estimation of covariance matrices for clutter echoes in STAP (Space-Time Adaptive Processing) all belong to sessions of background learning. There are double common defects for all these schemes. The first is that the efficacy of estimating clutter background might be influenced heavily by existence of target signal, which is referred to as target leakage in literatures; the second is the non-homogeneousness widely existed in clutter environment which easily leads to mismatch of the consequence of learning with the actual clutter scenario at the target location. Nevertheless, VC subspace detector stands far away from these trouble because the process of background process is accomplished implicitly and simultaneously with the detection operation. While the raw data are being sampled and put into the detector sequentially, the volume-based correlations are examined and tested constantly until the breakpoint is reached. The information of clutter subspace is being learned in the form of volumes of basis for sample subspace. At the breakpoint, background learning ends while simultaneously the decision on the existence of target is made naturally. There is no need for extra effort of background learning. The learning and detection procedures are merged perfectly in VC subspace detector. We call this interesting property "Detecting while Learning".
	
	\item Remark 3. The acceptance criterion of VC subspace detector for the hypothesis on existence of target signal is whether or not certain volume-based correlations are zero. It is generally believed that testing some quantities to be zero is impractical. Lots of statistical methods, such as MUSIC and other subspace-class algorithms use reciprocal to transform the value near zero to be relatively large. Hence the difference between negligible results could be sharpen and the power of detection is greatly strengthened. VC subspace detector is not an exception.
	
	\item Remark 4. In noisy situation, the detection problem becomes more complicated. Random noise will disturb our judgement and must be eliminated to assure the quality of detector. There are plenty of mature techniques for extracting the informative subspaces from noise, such as eigen-decomposition based filtering and subspace tracking, for us to choose as the preprocessing steps of VC subspace detector. Some detailed discussion on these issues will be given in section IV.
	
\end{itemize}

\subsection{Theoretical property of the proposed detector in noiseless situation}

\par
We will show theoretically that our volume-based correlation subspace detector can totally eliminate the influence of the clutter and detect the target signal in known target subspace effectively. Besides, an interesting monotonicity property for volumes-based correlation values of subspaces with different dimension will also be given, which provide a theoretical insight of our subspace detector. 

\begin{Theorem}\label{H1Thm1}
Let $\bm H$ be the $n$-dimensional Hilbert space, $\bm{H}_S$ and $\bm{H}_C$ be the subspace of $\bm H$ corresponding to target and clutter respectively. $\bm{H}_S\cap\bm{H}_C=\{0\}$, $\Dim{\bm{H}_C}=d_1$, $\Dim{\bm{H}_S}=d_2$, Suppose $\bm{y}_i, i=1,2,\cdots$ be randomly sampled data either containing both target and clutter,
\begin{equation}
\bm{y}_i = \bm s_i+\bm c_i,\qquad i = 1,2,\cdots
\end{equation}
or only containing clutter
\begin{equation}
\bm{y}_i = \bm c_i,\qquad i = 1,2,\cdots
\end{equation}	
where $\bm s_i\in\bm{H}_S$ and $\bm c_i\in\bm{H}_C$. Let
\begin{equation}
\bm{Y}^{(m)}:=[\bm{y}^{(1)},\cdots,\bm{y}^{(m)}],
\end{equation}
and $\bm{Q}_{\bm{Y}}^{(m)}$ and $\bm{Q}_S$ be orthogonal matrix with columns being the basis vectors of $\Span(\bm{Y}^{(m)})$ and $\bm{H}_S$, then we have the following monotone property
\begin{equation}\label{Monotone}
V(\bm{Y}^{(1)})\geq V(\bm{Y}^{(2)})\geq\cdots\geq V(\bm{Y}^{(d_1)})
\end{equation}
where
\begin{equation}
V(\bm{Y}^{(m)})=\Vol_{m+d_2}([\bm{Q}_{\bm{Y}}^{(m)},\bm{Q}_S]),\quad m=1,2,\cdots,d_1.
\end{equation}
\end{Theorem}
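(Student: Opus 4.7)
The plan is to reduce the $(m+d_2)$-dimensional volume to a determinant on the $d_2$-dimensional ``signal slot'' and then exploit the fact that appending one new sample monotonically enlarges the orthogonal projector onto the sample subspace. Concretely, I will rewrite $V(\bm{Y}^{(m)})^2$ as $\det(\bm{I}_{d_2} - \bm{Q}_S^T \bm{P}_m \bm{Q}_S)$, where $\bm{P}_m := \bm{Q}_{\bm{Y}}^{(m)} (\bm{Q}_{\bm{Y}}^{(m)})^T$ is the orthogonal projector onto $\Span(\bm{Y}^{(m)})$, and then invoke Loewner-monotonicity of the determinant on the PSD cone. Because both $\bm{Q}_{\bm{Y}}^{(m)}$ and $\bm{Q}_S$ have orthonormal columns, the Gram matrix of $[\bm{Q}_{\bm{Y}}^{(m)}, \bm{Q}_S]$ is block-identity with cross-block $\bm{B}_m := (\bm{Q}_{\bm{Y}}^{(m)})^T \bm{Q}_S$, and combining the volume definition (\ref{VolumeDef2}) with the Schur-complement determinant identity yields
\begin{equation*}
V(\bm{Y}^{(m)})^2 \;=\; \det\!\begin{pmatrix} \bm{I}_m & \bm{B}_m \\ \bm{B}_m^T & \bm{I}_{d_2} \end{pmatrix} \;=\; \det\!\left(\bm{I}_{d_2} - \bm{B}_m^T \bm{B}_m\right) \;=\; \det\!\left(\bm{I}_{d_2} - \bm{Q}_S^T \bm{P}_m \bm{Q}_S\right).
\end{equation*}
This expression is consistent with (\ref{VolAng}): the eigenvalues of $\bm{I}_{d_2} - \bm{Q}_S^T \bm{P}_m \bm{Q}_S$ are exactly the squared sines of the principal angles between $\Span(\bm{Y}^{(m)})$ and $\bm{H}_S$.

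The one-column orthogonalization recursion used in Detector~1 writes $\bm{Q}_{\bm{Y}}^{(m+1)} = [\bm{Q}_{\bm{Y}}^{(m)},\, \bm{q}_{m+1}]$ with $\bm{q}_{m+1}$ a unit vector orthogonal to every earlier column, so $\bm{P}_{m+1} = \bm{P}_m + \bm{q}_{m+1}\bm{q}_{m+1}^T \succeq \bm{P}_m$. Since generalized congruence by $\bm{Q}_S$ preserves the positive semidefinite order, I obtain
\begin{equation*}
0 \;\preceq\; \bm{I}_{d_2} - \bm{Q}_S^T \bm{P}_{m+1} \bm{Q}_S \;\preceq\; \bm{I}_{d_2} - \bm{Q}_S^T \bm{P}_m \bm{Q}_S \;\preceq\; \bm{I}_{d_2}.
\end{equation*}
Monotonicity of the determinant on the PSD cone (if $0 \preceq A \preceq B$ then $\lambda_i(A) \leq \lambda_i(B)$ for every $i$ and hence $\det A \leq \det B$) then gives $V(\bm{Y}^{(m+1)})^2 \leq V(\bm{Y}^{(m)})^2$, and iterating from $m=1$ up to $m = d_1-1$ produces the chain (\ref{Monotone}).

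The main obstacle is essentially bookkeeping rather than conceptual: I need to confirm that on the range $1 \leq m \leq d_1$ each new sample $\bm{y}_{m+1}$ genuinely contributes a new direction so that $\bm{Q}_{\bm{Y}}^{(m+1)}$ has one more column than $\bm{Q}_{\bm{Y}}^{(m)}$ and the symbol $V(\bm{Y}^{(m)}) = \Vol_{m+d_2}([\bm{Q}_{\bm{Y}}^{(m)}, \bm{Q}_S])$ really refers to an $(m+d_2)$-dimensional volume. Under the generic sampling hypothesis the paper is using this holds automatically, because $\Dim(\bm{H}_C) = d_1$ under the clutter-only alternative and $\Dim(\bm{H}_S \oplus \bm{H}_C) = d_1 + d_2$ under the signal-plus-clutter alternative. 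In the degenerate event that $\bm{y}_{m+1}$ happens to lie in $\Span(\bm{Q}_{\bm{Y}}^{(m)})$, Detector~1 sets $\bm{Q}_{\bm{Y}}^{(m+1)} = \bm{Q}_{\bm{Y}}^{(m)}$ and the trivial equality $V(\bm{Y}^{(m+1)}) = V(\bm{Y}^{(m)})$ preserves the chain, so the conclusion is robust to this edge case.
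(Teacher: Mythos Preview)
Your argument is correct, and it takes a genuinely different route from the paper's own proof. The paper first establishes a ``base times height'' recursion
\[
\Vol_{m+l}([\bm{X},\bm{Y}^{(m)}])=\Vol_{m-1+l}([\bm{X},\bm{Y}^{(m-1)}])\,\|\bm{P}_{[\bm{X},\bm{Y}^{(m-1)}]}^{\perp}\bm{y}_m\|
\]
via the Schur complement applied to peel off the last column of the Gram matrix, and then specializes to $\bm{X}=\bm{Q}_S$, $\bm{Y}^{(m)}=\bm{Q}_{\bm{Y}}^{(m)}$: since the appended column $\bm{q}_{\bm{Y}}^{(m)}$ is a unit vector, the projection factor is at most $1$ and monotonicity follows. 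You instead apply the Schur complement in the other block to collapse everything onto a fixed $d_2\times d_2$ matrix $\bm{I}_{d_2}-\bm{Q}_S^T\bm{P}_m\bm{Q}_S$, and then let the Loewner order $\bm{P}_m\preceq\bm{P}_{m+1}$ do the work via determinant monotonicity on the PSD cone. The paper's recursion gives an exact multiplicative step $V(\bm{Y}^{(m)})/V(\bm{Y}^{(m-1)})=\|\bm{P}_{[\bm{Q}_S,\bm{Q}_{\bm{Y}}^{(m-1)}]}^{\perp}\bm{q}_{\bm{Y}}^{(m)}\|$, which is slightly more informative than the bare inequality; your approach is more structural, ties the quantity directly to the principal angles, and has the pleasant feature that the matrix whose determinant you track has size independent of $m$.
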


The monotone property formulated in Theorem \ref{H1Thm1} might be useful when VC subspace detector was used in practical scenario. It guarantees that the test quantity of our detector will increase continuously with the dimension of sample subspace until the breakpoint is reached. So the breakpoint should be found easily without the annoying fluctuation of test quantity.

\begin{Theorem}\label{H0Thm2}
Under the same assumption of theorem \ref{H1Thm1}, the sufficient and necessary condition for existence of target signal in sample subspace is there is an integer $K<n$ such that
\begin{equation}
V(\bm{Y}^{(K)})=\Vol_{K+d_2}([\bm{Q}_{\bm{Y}}^{(K)},\bm{Q}_S])=0.\label{corr1}
\end{equation}
\end{Theorem}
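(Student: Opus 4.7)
The plan is to reduce the volume condition to a linear-algebraic dimension count via Lemma \ref{lemma2}, and then invoke the hypothesis $\bm{H}_S\cap\bm{H}_C=\{0\}$ to separate the two hypotheses. The key observation is that the columns of $[\bm{Q}_{\bm{Y}}^{(K)},\bm{Q}_S]$ span $\Span(\bm{Y}^{(K)})+\bm{H}_S$, so Lemma \ref{lemma2} gives
\begin{equation*}
V(\bm{Y}^{(K)})=0 \;\Longleftrightarrow\; \Dim\!\bigl(\Span(\bm{Y}^{(K)})+\bm{H}_S\bigr)<K+d_2.
\end{equation*}
Under the generic sampling convention already in force in Section III (so that $\bm{Q}_{\bm{Y}}^{(K)}$ has exactly $K$ orthonormal columns for $K\leq d_1+d_2$), this is in turn equivalent to $\Span(\bm{Y}^{(K)})\cap\bm{H}_S\neq\{0\}$, which is the statement I actually work with.

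For the necessity direction, I would assume target is present, i.e.\ $\bm{y}_i=\bm{s}_i+\bm{c}_i$ with $\bm{s}_i,\bm{c}_i\neq 0$, and take $K=d_1+1$ (which is admissible since $d_1+d_2\leq n$). Because $\bm{c}_1,\ldots,\bm{c}_K$ are $d_1+1$ vectors in the $d_1$-dimensional subspace $\bm{H}_C$, they must be linearly dependent, so there exist scalars $\alpha_1,\ldots,\alpha_K$ not all zero with $\sum_{i=1}^{K}\alpha_i\bm{c}_i=\bm 0$. Then $\sum_{i}\alpha_i\bm{y}_i=\sum_{i}\alpha_i\bm{s}_i\in\bm{H}_S$, and generic sampling makes this vector non-zero. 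This exhibits a non-trivial element of $\Span(\bm{Y}^{(K)})\cap\bm{H}_S$, and the equivalence above delivers $V(\bm{Y}^{(K)})=0$.

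For the sufficiency direction, I would argue the contrapositive. If target is absent, every $\bm{y}_i=\bm{c}_i\in\bm{H}_C$, so $\Span(\bm{Y}^{(K)})\subseteq\bm{H}_C$ for every $K$. The hypothesis $\bm{H}_S\cap\bm{H}_C=\{0\}$ immediately forces $\Span(\bm{Y}^{(K)})\cap\bm{H}_S=\{0\}$, and the opening equivalence then yields $V(\bm{Y}^{(K)})>0$ for every $K<n$.

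The only genuinely delicate step is showing that the target-subspace combination $\sum_i\alpha_i\bm{s}_i$ in the necessity direction is non-zero: this is where the proof inherits the generic-sampling viewpoint that underlies Theorem \ref{H1Thm1} and all of Section III, rather than an argument that works for every realization. I would flag this explicitly (so that no reader worries about the degenerate case where the clutter dependence relation happens to annihilate the $\bm{s}_i$'s as well), but would not attempt a probabilistic quantification here, since the noisy version is deferred to the next section. A secondary bookkeeping point is that the subscript $K+d_2$ in $\Vol_{K+d_2}([\bm{Q}_{\bm{Y}}^{(K)},\bm{Q}_S])$ must match the actual number of orthonormal columns of $\bm{Q}_{\bm{Y}}^{(K)}$; for $K\leq d_1+1$ this is automatic under the generic-sampling convention, and no separate verification is required.
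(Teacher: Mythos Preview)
Your proposal is correct and follows the same overall architecture as the paper: take $K=d_1+1$ for necessity and argue the contrapositive for sufficiency, both via the rank/dimension criterion of Lemma~\ref{lemma2}. The necessity argument is essentially identical to the paper's---where the paper says the $d_1+1$ independent directions in $\bm{Y}^{(d_1+1)}$ decompose into $d_1$ from $\bm{H}_C$ and one from $\bm{H}_S$, you make this concrete by exhibiting the intersection vector $\sum_i\alpha_i\bm{y}_i=\sum_i\alpha_i\bm{s}_i$ via a dependence relation among the $\bm{c}_i$'s. Both versions rely on the generic-sampling assumption at exactly the point you flag.

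The sufficiency direction is where you diverge. The paper proves it by invoking the principal-angle identity~(\ref{VolAng}) to get $\Vol_{d_1+d_2}([\bm{Q}_S,\bm{Q}_{\bm{Y}}^{(m)}])=\prod_j\sin\theta_j(\bm{H}_S,\bm{H}_C)>0$ for $m\geq d_1$, and then appeals to the monotone property of Theorem~\ref{H1Thm1} to push positivity back to all smaller $m$. Your route is more direct: $\Span(\bm{Y}^{(K)})\subseteq\bm{H}_C$ and $\bm{H}_S\cap\bm{H}_C=\{0\}$ immediately give $\Span(\bm{Y}^{(K)})\cap\bm{H}_S=\{0\}$, whence the dimension count yields $V(\bm{Y}^{(K)})>0$. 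This is cleaner---it needs neither the principal-angle machinery nor Theorem~\ref{H1Thm1}---but the paper's version has the minor virtue of identifying the limiting value $\prod_j\sin\theta_j(\bm{H}_S,\bm{H}_C)$ explicitly. One bookkeeping point you mention but should state precisely: in the no-target case with $K>d_1$, the matrix $\bm{Q}_{\bm{Y}}^{(K)}$ has only $d_1$ columns, so the volume subscript is effectively $d_1+d_2$ rather than $K+d_2$; your intersection argument still goes through, but the equivalence you state at the outset should be phrased in terms of the actual column count of $\bm{Q}_{\bm{Y}}^{(K)}$ rather than $K$.
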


Theorem \ref{H1Thm1} and \ref{H0Thm2} theoretically guarantee the behavior of our volume-based detector, as is shown in figure \ref{figure1}, the volume quantity $V(\bm{Y}^{(m)})$ will gradually drop as the increase of samples, it will drop to zero at index $K$ if and only if target signal exists; and remain non-zero when only clutter exists.
The index $K$ in Theorem \ref{H0Thm2} is exactly the dimension of clutter subspace $\bm{H}_C$ when the target signal is presented in sample subspace. Concerning with the monotone property in Theorem \ref{H1Thm1}, $K$ is the smallest index for sample subspace $\Span(\bm{Y}^{(m)})$ to satisfy (\ref{corr1}). Such $K$ could be called the critical point or phase transition point, for it indicates the essential change of volume correlation between sample subspace and target signal subspace. The reason for this change must be rank deficiency of direct sum of sample subspace and target subspace, which actually indicates the existence of target components in sample subspace.
The behavior of our VC detector described in theorem \ref{H1Thm1} and \ref{H0Thm2} is illustrated in figure \ref{figure1}, in which we randomly generate two subspaces as $\bm H_S$ and $\bm H_C$, and plot the variation of volume-based correlation $V(\bm Y^{(m)})$ with increase of $m$ in Detector 1.

\begin{figure}[htbp]
\includegraphics[width=0.9\textwidth]{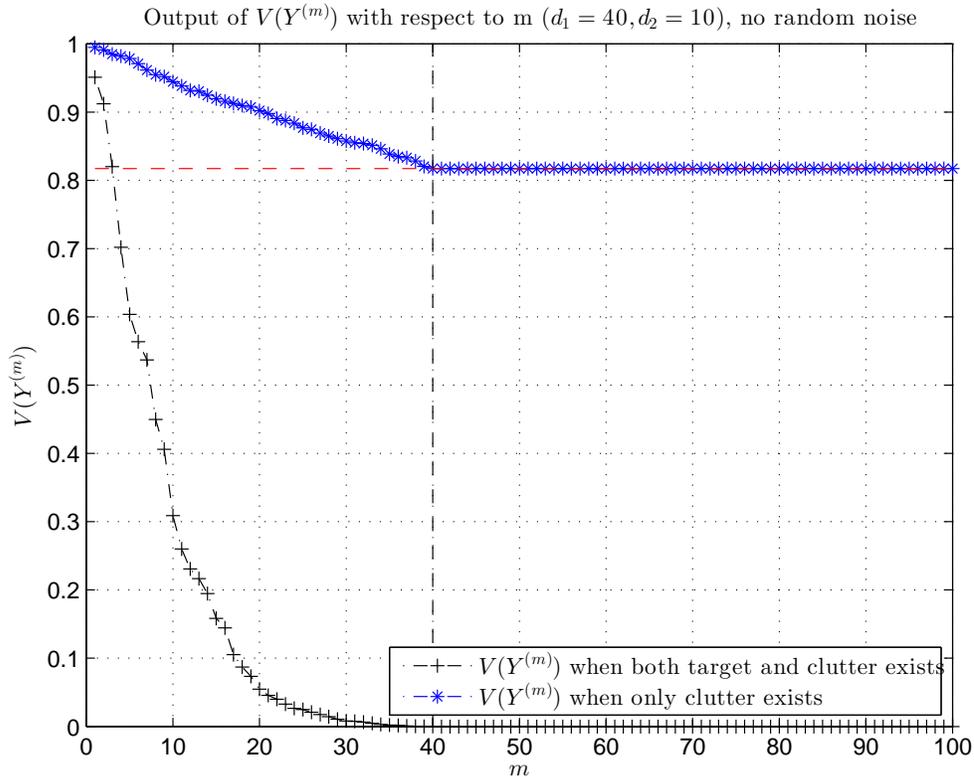}
\caption{Simulation of the volume-correlation $V(\bm Y^{(m)})$ in noiseless situation}
		\label{figure1}
\end{figure}

\section{The Volume-Correlation Subspace Detector in noisy environment}
\par
In this section, we will modify our VC subspace detector to make it be more suitable for noisy environment. Then some asymptotic result on the performance of VC subspace detector will be given.

\subsection{Main Idea}

\par
The main problem here is the sample subspace has been contaminated by random noise and can not be used directly to compute the volume correlation in VC subspace detector. Therefore the noise must be cleared in advance. For most statistical signal processing algorithms concerned with subspaces, such as MUSIC, ESPRIT and so on, the target signal and random noise are separated into signal subspaces and noise subspaces by eigen-decomposition of correlation matrices firstly for further treatment. It implies the natural strategy of denoising for subspace-based signal processing. That is extracting signal subspaces for follow-up analysis and discarding noise subspaces simply for the purpose of noise elimination. In our case, the theme of VC subspace detector is detecting target signal from known target subspace lying in unknown clutter subspace. The main obstacle for our detector is how to accomplish subspace detection under the background of deterministic but sealed clutter. Hence random noise isn't the critical factor in the detection and will be treated concisely.

To be specific, let $\bm H$ be $n$-dimensional Hilbert space, $\bm{H}_S$ and $\bm{H}_C$ be target and clutter subspaces of $\bm H$ respectively, and $\bm{H}_S\cap\bm{H}_C=\{0\}$, $\Dim{\bm{H}_C}=d_1$, $\Dim{\bm{H}_S}=d_2$, the signal model for our detector is
$$
\bm{r}=\bm{y}+\bm{w},
$$
where 
$\bm{w}$ is the noise vector usually assumed to be white and Gaussian distributed with mean 0 and variance $\sigma^2$, and $\bm{y}$ represents the signal components
being sampled randomly from $\bm{H}_S\oplus\bm{H}_C$ or just $\bm H_C$. The deterministic vector $\bm{y}$ can be randomized and written as
$$
\bm{y}=\bm{X}\bm{\alpha},
$$
where 
$\Span(\bm{X})=\bm{H}_S\oplus\bm{H}_C$ or $\bm H_C$, and $\bm{\alpha}$ is a random vector with finite second order moments. Then the correlation matrix $\mathrm{R}_r$ of the sample data $\bm{r}$ is
\begin{equation}\label{CorrMatrix}
\mathrm{R}_{\bm r}=\mathbb{E}\{\bm{r}\bm{r}^T\}=\bm{X}\mathbb{E}\{\bm{\alpha}\bm{\alpha}^T\}\bm{X}^T+\sigma^2\bm I_n
\end{equation}
without loss of generality, $\bm{X}$ is assumed to be full rank, we denote the column rank of $\bm X$ by $k$, then the eigenvalues of $\mathrm{R}_r$ could be listed as
\begin{equation}\label{realEigenvalue}
\lambda_1\geq\lambda_2\geq\cdots\geq\lambda_k\geq\lambda_{k+1}=\cdots=\lambda_n=\sigma^2,
\end{equation}
and the corresponding eigenvectors are
$$
\bm{q}_1, \bm{q}_2,\cdots,\bm{q}_k,\bm{q}_{k+1},\cdots,\bm{q}_n
$$
Denote $\bm{Q}_{SC}:=[\bm{q}_1, \bm{q}_2,\cdots, \bm{q}_k]\in\mathbb{R}^{n\times{k}}$, $\bm{Q}_{N}:=[\bm{q}_{k+1}, \bm{q}_{k+2},\cdots, \bm{q}_n]\in\mathbb{R}^{n\times{n-k}}$.  It is clear that
\begin{equation}
\Span(\bm{Q}_{SC}) = \bm H_S \oplus \bm H_C,\quad k=d_1+d_2,
\end{equation}
when both signal and clutter are present, and
\begin{equation}
\Span(\bm{Q}_{SC}) =  \bm H_C,\quad k=d_1,
\end{equation}
when the sampled data contains "pure" clutter.
$\Span(\bm{Q}_{SC})$ and $\Span(\bm{Q}_{N})$ are commonly called signal subspace and noise subspace. Therefore $\Span(\bm{Q}_{SC})$ could be used as proxy of $\bm{H}_S\oplus\bm{H}_C$ (or $\bm H_C$) and the main idea in previous section is workable as well in the noisy environment.

\subsection{VC Subspace Detector in noisy environment}

The VC subspace detector could be extended to noisy scenario as follows:

\textbf{Detector 2.}
\vspace{0.1cm}
\hrule
\vspace{0.1cm}
\begin{itemize}

\item Initial Step :

Denote the received data $\{\bm{r}_1,\cdots,\bm{r}_m\}$ by $\bm{R}^{(m)}$. Obtain $\{{\bm s}_1,\cdots,{\bm s}_{d_2}\}$ as the orthonormal basis vectors of target subspace and denote it by $\bm{Q}_S$. Let the sample covariance matrix be $\hat{\mathrm{R}}^{(0)}=0$. Index $i$ is set to $1$. Set two thresholds $T$ and $\epsilon$ at appropriate values.

\item Step 1 :

Get the new sample $\bm{r}_{i}$, compute the covariance matrix as
\begin{equation}\label{CorrMat}
  \hat{\mathrm{R}}^{(i)}=\frac{i-1}{i}\hat{\mathrm{R}}^{(i-1)}+\frac{1}{i}\bm{r}_{i}\bm{r}_{i}^{\rm T},
\end{equation}
Assume the eigenvalues of $\hat{\mathrm{R}}^{(i)}$ be
\begin{equation}\label{eigenvalue}
\hat{\lambda}_1\geq\hat{\lambda}_2\geq\cdots\geq\hat{\lambda}_{k_i}\geq\hat{\lambda}_{k_i+1}\geq\cdots\geq\hat{\lambda}_n,
\end{equation}
and the corresponding eigenvectors be
\begin{equation}\label{eigenvector}
\hat{\bm{q}}_1, \hat{\bm{q}}_2,\cdots,\hat{\bm{q}}_{k_i},\hat{\bm{q}}_{k_i+1},\cdots,\hat{\bm{q}}_n,
\end{equation}
determine the dimension of signal subspace $k_i$, and
construct the estimated sample subspace $\hat{\bm{U}}^{(i)}$ as
\begin{equation}
\hat{\bm{Q}}^{(i)}=[\hat{\bm{q}}_1, \hat{\bm{q}}_2,\cdots,\hat{\bm{q}}_{k_i}]
\end{equation}

\item Step 2 :

Compute the test quantity as
\begin{equation}
  T(\bm{R}^{(i)})=\frac{1}{\Vol_{k_i+d_2}([\hat{\bm{Q}}^{(i)},\ \bm{Q}_S])},
\end{equation}
if $T(\bm{R}^{(i)})>T$, concludes the existence of target signal and exits;

\item Step 3 :

if $|T(\bm{R}^{(i)})-T(\bm{R}^{(i-1)})|>\epsilon$, then set $i=i+1$ and go to step 1; otherwise concludes the non-existence of target signal and exits.

\end{itemize}
\hrule
\vspace{0.3cm}

As described above, the subspace $\hat{\bm{Q}}^{(i)}$ built from eigenvectors corresponding to large eigenvalues of covariance matrix of sampled data was taken to be the sample subspace in VC subspace detector. It is because the true covariance matrices can't be obtained straight from sample data such that the sample covariance matrices were calculated via (\ref{CorrMat}) instead.
The dimension of signal-plus-clutter subspace (or clutter subspace), i.e., $k_i$ in (\ref{eigenvalue}) actually needs to be estimated. There are various methods can be used, from the conventional AIC or MDL and their variations \cite{akaike1974new,wax1985detection}, to the newest Bayesian Information Criterion (BIC \cite{schwarz1978estimating}, GBIC \cite{lu2013generalized}), Random Matrix Theory (RMT, \cite{kritchman2009non}) and Entropy Estimation of Eigenvalues (EEE, \cite{asadi2013source}), etc. Since they all belong to another domain of research, we will not discuss this topic in detail here. In the following analysis, we just assume the dimension $k_i$, which is related with $d_1$ and $d_2$, is accurately known or estimated. With the eigen-decomposition method,
 the accuracy about this approximation of subspace $\bm H_S\oplus \bm H_C$ (or $\bm H_C$) had been studied extensively \cite{Stoica1991Statistical}\cite{stoica1989music}\cite{jeffries1985asymptotic} and the feasibility of $\hat{\bm{Q}}^{(i)}$ had been proved asymptotically. Hence we use it in VC subspace detector as the substitution of sample subspace when noise is presented. We can expect the proposed VC subspace detector will asymptotically approximate the VC subspace detector in noiseless scenario, and this expectation is validated by the theory in next section.

\subsection{Property of Detector}

To avoid the vagueness brought by asymptotic conclusion of the performance of VC subspace detector in the noisy background, we give some non-asymptotic analysis on the capability of our detector with knowledge of random matrices and concentration inequalities. Denote the received data by $\{\bm{r}_1,\cdots,\bm{r}_m\}$, and denote the correlation matrix by $\mathrm{R}_r$ as in (\ref{CorrMatrix}). 
Suppose the eigenvalues of $\mathrm{R}_r$ be as (\ref{eigenvalue}) and its eigenvectors be as (\ref{eigenvector}), We have

\begin{Theorem}\label{H1Thm1N}
Let $H$ be $n$-dimensional Hilbert space, $\bm{H}_S$ and $\bm{H}_C$ be target and clutter subspaces of $\bm H$ respectively, $\dim(\bm{H}_S\cap\bm{H}_C)=0$, $\Dim{\bm{H}_C}=d_1$, $\Dim{\bm{H}_S}=d_2$,
\begin{equation}
\bm{r}_i=\bm{y}_i+\bm{w}_i,\qquad i=1,2,\cdots,m.\nonumber
\end{equation}
where $\bm{r}_i$ is the sample data, $\bm{y}_i\in\bm{H}_S\oplus\bm{H}_C$, $\bm{w}_i\thicksim\mathcal{N}(0,\sigma^2\bm{I}_n)$ is Gaussian white noise.

If the target signal presents in sample data, then for any $0<\varepsilon<1$ and $\delta>0$, if
\begin{equation}\label{Res8}
m\geq\frac{1+\varepsilon}{(\sqrt{\delta+1}-1)^2}\left(\sum_{\stackrel{i,j=1}{j\neq i}}^{d_2+d_1} \frac{\lambda_i\lambda_j}{(\lambda_i-\lambda_j)^2}+(n-d_1-d_2)\sum_{i=1}^{d_2+d_1}\frac{\lambda_i\sigma^2}{(\sigma^2-\lambda_i)^2}\right),
\end{equation}
then there exists a constant $C>0$, such that
\begin{equation}\label{Res7}
\frac{1}{T^2(\bm{R}^{(m)})}\leq\delta^{d_2} + O(\delta^{d_2+1})
\end{equation}
holds with probability
\begin{equation}\label{prob1}
\mathbb{P}\geq1-\exp\{-\frac{(d_1+d_2)\cdot{n}\cdot\varepsilon^2}{C}\}.
\end{equation}

On the contrary, in the case of non-target, for any $0<\varepsilon<1$ and $\delta>0$, when
\begin{eqnarray}\label{Res9}
	m \geq\frac{1+\varepsilon}{(\sqrt{\delta+1}-1)^2}\cdot\left(  \sum_{i =1}^{d_1} \sum_{\stackrel{j=1}{j\neq i}}^{d_1} \frac{\lambda_i \lambda_j}{(\lambda_i-\lambda_j)^2}+\sum_{i =1}^{d_1}(n-d_1)\frac{\lambda_i\sigma^2}{(\sigma^2-\lambda_i)^2}  \right),
\end{eqnarray}
we have
\begin{equation}\label{Res10}
	|\frac{1}{T^2(\bm R^{(m)})}-\tau^2(\bm{H}_S,\bm{H}_C)|\leq s_{d_1-1}(\bm{Q}_C^T\bm{P}_S^{\perp}\bm{Q}_C)\delta + O(\delta^{2}),
\end{equation}
holds with probability
\begin{equation}\label{prob2}
	\mathbb{P} \geq 1-\exp\{-\frac{d_1\cdot{n}\cdot \varepsilon^2}{C}\},
\end{equation}
here $\tau(\bm{H}_S,\bm{H}_C)>0$ is a constant related with $\bm{H}_S$ and $\bm{H}_C$, $\bm Q_S$, $\bm{Q}_C$ are the orthogonal bases of $\bm H_S$ and $\bm H_C$, respectively, and $\bm{P}_S^{\perp}$ is the projection matrix onto the orthogonal complement of $\bm{H}_S$, $s_{k}(\bm A)$ for $n \times n$ square matrix $\bm A$ is defined as:
\begin{equation}\label{ESF}
s_{k}(\bm A) := \sum_{1 \leq i_1 \leq \cdots \leq i_k \leq n} \sigma_{i_1}\cdots \sigma_{i_k}, 1\leq k \leq n,
\end{equation}	
where $\sigma_1,\cdots,\sigma_n$ are singular values of matrix $\bm A$.
\end{Theorem}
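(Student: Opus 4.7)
The plan is to split the analysis into a statistical step that controls the distance between the empirical top-eigenspace $\Span(\hat{\bm Q}^{(m)})$ and its noiseless limit, and a geometric step that converts subspace perturbation into control of the test quantity via the product-of-sines identity (\ref{VolAng}). Since $\hat{\bm Q}^{(m)}$ and $\bm Q_S$ have orthonormal columns, the key identity driving everything is
\begin{equation*}
  \frac{1}{T^2(\bm R^{(m)})} \;=\; \prod_{j=1}^{\min(k_i,d_2)} \sin^2 \theta_j\!\bigl(\Span(\hat{\bm Q}^{(m)}),\,\bm H_S\bigr),
\end{equation*}
so it suffices to locate these principal angles in each hypothesis and track how they are perturbed by replacing $\mathrm R_r$ with $\hat{\mathrm R}^{(m)}$.

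First I would establish an operator-norm concentration bound $\|\hat{\mathrm R}^{(m)} - \mathrm R_r\|_2 \leq \eta$ by matrix Bernstein (or a Gaussian Hsu--Kakade--Zhang bound), with failure probability at most $\exp(-c\,r\,n\,\varepsilon^2/C)$, where $r$ is the effective rank of the signal part of $\mathrm R_r$: namely $d_1+d_2$ under $H_1$ and $d_1$ under $H_0$. This is precisely the exponent in (\ref{prob1}) and (\ref{prob2}). Next I would apply a Wedin / Davis--Kahan $\sin \Theta$ bound, but in the refined eigenvector-by-eigenvector form coming from a Kato perturbation expansion: writing each $\hat{\bm q}_i - \bm q_i$ as a first-order sum over the other eigendirections with denominators $(\lambda_i - \lambda_j)$ explains the precise form of (\ref{Res8})-(\ref{Res9}). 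The in-signal double sum $\sum_{i\neq j} \lambda_i\lambda_j/(\lambda_i-\lambda_j)^2$ arises from signal-to-signal leakage after squaring, while $(n-k)\sum_i \lambda_i\sigma^2/(\sigma^2-\lambda_i)^2$ accounts for signal-to-noise leakage summed over the $n-k$ noise directions. The prefactor $(1+\varepsilon)/(\sqrt{\delta+1}-1)^2$ is what one gets after inverting the quadratic relation $\|\hat{\mathrm R}^{(m)} - \mathrm R_r\|_2 \leq \sqrt{\delta+1}-1$ needed to guarantee $\sin^2 \Theta(\Span(\hat{\bm Q}^{(m)}),\Span(\bm Q_{SC})) \leq \delta$.

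The geometric step then splits into the two cases. Under $H_1$, since $\bm H_S \subset \bm H_S \oplus \bm H_C = \Span(\bm Q_{SC})$, all $d_2$ principal angles between $\Span(\bm Q_{SC})$ and $\bm H_S$ vanish; perturbing $\Span(\bm Q_{SC})$ by $\sqrt{\delta}$ in the Grassmannian and applying the stability of singular values of the cross-Gram matrix raises each $\sin \theta_j(\Span(\hat{\bm Q}^{(m)}),\bm H_S)$ to at most $\sqrt{\delta}+O(\delta)$, so the product of $d_2$ squared sines is $\delta^{d_2}+O(\delta^{d_2+1})$, giving (\ref{Res7}). Under $H_0$, $\Span(\bm Q_{SC}) = \bm H_C$ and the assumption $\bm H_S \cap \bm H_C = \{0\}$ forces $\tau(\bm H_S,\bm H_C) := \prod_j \sin \theta_j(\bm H_C, \bm H_S) > 0$. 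Recognising that $\sin^2 \theta_j(\bm H_C,\bm H_S)$ are exactly the squared singular values of $\bm Q_C^T \bm P_S^\perp \bm Q_C$, a first-order expansion of the symmetric product $\prod_j(x_j+\Delta_j)$ about $x_j = \sin^2\theta_j$ produces a sum of products of $d_1-1$ of the original values, which is precisely $s_{d_1-1}(\bm Q_C^T \bm P_S^\perp \bm Q_C)$; combined with $|\Delta_j|\leq\delta$ this yields (\ref{Res10}).

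The hardest part will be matching the exact functional form of the sample-complexity conditions (\ref{Res8})-(\ref{Res9}). A crude operator-norm plus $\sin\Theta$ argument only delivers the correct order of magnitude; obtaining the explicit coefficients $\lambda_i\lambda_j/(\lambda_i-\lambda_j)^2$ and the explicit $(n-k)$ factor requires a careful second-moment computation at the level of individual eigenvectors, which becomes delicate when signal eigenvalues are close. A secondary difficulty is keeping the $O(\delta^{d_2+1})$ and $O(\delta^2)$ remainders uniform on the high-probability event: one must combine a second-order Taylor bound on the product-of-sines map with the same $\sin\Theta$ control, and care is needed so that the remainders do not swamp the leading-order coefficients of (\ref{Res7}) and (\ref{Res10}).
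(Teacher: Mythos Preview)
Your proposal is broadly sound and would lead to a valid proof, but the paper takes a somewhat different, more direct route. Rather than working with principal angles and the product-of-sines identity, the paper rewrites $1/T^2(\bm R^{(m)})$ as a determinant via the Schur complement,
\[
  \frac{1}{T^2(\bm R^{(m)})} \;=\; \det\!\bigl((\hat{\bm Q}^{(m)})^T \bm P_S^{\perp}\,\hat{\bm Q}^{(m)}\bigr),
\]
models the eigenvector estimate as $\hat{\bm Q}^{(m)} = \bm Q_{SC} + \bm E^{(m)}$ with $\bm E^{(m)}$ asymptotically Gaussian (the classical Stoica--Nehorai result on sample eigenvectors, which is exactly where the coefficients $\lambda_i\lambda_j/(\lambda_i-\lambda_j)^2$ and $(n-k)\lambda_i\sigma^2/(\sigma^2-\lambda_i)^2$ come from), and then applies Ipsen-type determinant perturbation bounds to $\det(\bm A+\bm E)$ with $\bm A = \bm Q_{SC}^T\bm P_S^\perp \bm Q_{SC}$. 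The two hypotheses are separated by computing $\Rank(\bm A)$: it is $d_1$ (rank-deficient) under $H_1$ and $d_1$ (full rank) under $H_0$, which feeds directly into the two branches of the determinant perturbation lemma and yields the $\delta^{d_2}$ versus $s_{d_1-1}(\bm A)\,\delta$ behaviours. The Frobenius concentration of the Gaussian matrix $\bm E^{(m)}$ is what produces the $(1+\varepsilon)/(\sqrt{\delta+1}-1)^2$ prefactor, after bounding $\|\bm E\|_2 \le 2\|\bm P_S^\perp \bm E^{(m)}\|_F + \|\bm P_S^\perp \bm E^{(m)}\|_F^2$ and solving the quadratic.

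What each approach buys: the paper's route makes the appearance of $s_{d_1-1}(\cdot)$ and the sample-complexity constants essentially automatic, since they are read off from the Stoica covariance and the Ipsen bound without any second-moment computation from scratch; this sidesteps precisely the ``hardest part'' you flagged. Your route via Davis--Kahan/Kato and product-of-sines is more self-contained and arguably more transparent geometrically, but to recover the exact coefficients in (\ref{Res8})--(\ref{Res9}) you would effectively be rederiving the Stoica eigenvector covariances; citing that result directly, as the paper does, is the shorter path. Note also one small misattribution in your sketch: the $\sqrt{\delta+1}-1$ threshold is applied to the eigenvector perturbation $\|\bm P_S^\perp \bm E^{(m)}\|_F$, not to $\|\hat{\mathrm R}^{(m)}-\mathrm R_r\|_2$.
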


Theorem \ref{H1Thm1N} describes the performance of our VC subspace detector in noisy environment. The main result (\ref{Res7}), together with (\ref{Res8}) and (\ref{prob1}), implies that when the target signal is present, the test quantity $T(\bm R^{(m)})$ of VC subspace detector will tend to infinity with an overwhelming probability, when the number of sample data is sufficient large. On the other hand, (\ref{Res10}) together with (\ref{Res9}) and (\ref{prob2}) ensures $T(\bm R^{(m)})$ to tend to a finite value. Therefore the decision point of this detector is to test whether $T(\bm R^{(m)})$ increases over a threshold, or stops increasing at a finite value. The result of Theorem \ref{H1Thm1N} implies that the output of our VC subspace detector will yield two different results in the two mentioned scenarios, no matter what clutter is given, whether or not there is noise, so far as that we have enough sample data. This shows the asymptotic effectiveness of our VC subspace detector.

The results (\ref{Res7}) and (\ref{Res10}) of theorem \ref{H1Thm1N} depends on the deviation parameter $\delta$. It determines the order for infinitesimal deviation between the computational result of volume correlation and its ideal value when there is no noise. The coefficient of leading term in (\ref{Res10}) is the value of  $s_{d_1-1}(\bm{Q}_{C}^T\bm P_{S}^{\perp}\bm Q_{C})$, the elementary symmetry functions of singular values of $\bm{Q}_{C}^T\bm P_{S}^{\perp}\bm Q_{C}$. It only depends on the mutual relation between subspaces $\bm{H}_C$ and $\bm{H}_S$.

It should be mentioned that the lower bound of observation size, $m$ in (\ref{Res8}) and (\ref{Res9}), is a sufficient condition for the conclusion of theorem. It indicates that for a given parameter $\delta$, when $m$ satisfies (\ref{Res8}) and (\ref{Res9}), then the volume-correlation will sufficiently satisfy (\ref{Res7}) and (\ref{Res10}) with the high probability. It is known that sufficient conditions are usually more conservative and strict than that needed. Therefore, the value of volume-correlation would converge much more faster in practise than what is depicted in the condition (\ref{Res8}) and (\ref{Res9}). This will also be illustrated in the later numerical simulations.

The effectiveness of detector 1 was demonstrated by numerical simulation in figure \ref{figure3}. Here $n=1024$, $d_1=40,d_2=10$ and the target and clutter signal were chosen randomly from corresponding subspaces. In the simulation, we assume the dimensions of signal subspace, i.e., $k_i$ in (\ref{eigenvalue}) to be accurately known beforehand, and the Signal-to-Noise Ratio is 0dB, then the mean and individual values of the volume-correlation $T(\bm{R}^{(m)})$ with respect to different $m$ in 1000 monte-carlo simulations are plotted. Here the result for each simulation was illustrated by the scatter diagram in the small sub-figures. As $m$ increases, $T(\bm{R}^{(m)})$ converges to infinity when target signal exists, while in case of no target signal, $T(\bm{R}^{(m)})$ converges to a finite value. Therefore, as a whole, the simulation result verified the validity of VC subspace detector.

It is mentioned that the eigen-decomposition step in detector 2 is just a noise reduction procedure, aiming at getting the estimated basis $\hat{\bm Q}^{(m)}$ for subspace $\bm H_S \oplus \bm H_C$ (or $\bm H_C$) from the sampled data. It is obvious that in this setting, when the noise's power approaches zero, i.e., when the $\sigma$ in (\ref{realEigenvalue}) tends to 0, the signal subspace $\Span(\bm{Q}^{(m)})$ will be exactly the same as $\bm H_S \oplus \bm H_C$ (or $\bm H_C$), and the eigen-decomposition procedure in Step 1 of Detector 2 is equivalent to the orthogonalization procedure in Detector 1. Thus, it is expected that the performance of Detector 2 will be the same as that of Detector 1 when noise vanishes.
The comparison about the output of Detector 2 in noisy environment and that of Detector 1 is illustrated in figure \ref{figure4}, is is obvious that, when there is noise, the volume-based correlation $V(\hat{\bm{Q}}^{(m)}):=\Vol_{k_m+d_2}([\hat{\bm{Q}}^{(m)},\ \bm{Q}_S])$ in Detector 2 will slightly deviate from the $V(\bm{Q}_{\bm{Y}}^{(m)})=\Vol_{m+d_2}([\bm{Q}_{\bm{Y}}^{(m)}, \bm{Q}_S])$ in Detector 1; but as the noise's power drops, the value of $V(\hat{\bm{Q}}^{(m)})$ in Detector 2 converges to $V(\bm{Q}_{\bm{Y}}^{(m)})$ in Detector 1.

\begin{figure}[htbp]
	\centering
	\includegraphics*[width=0.9\textwidth]{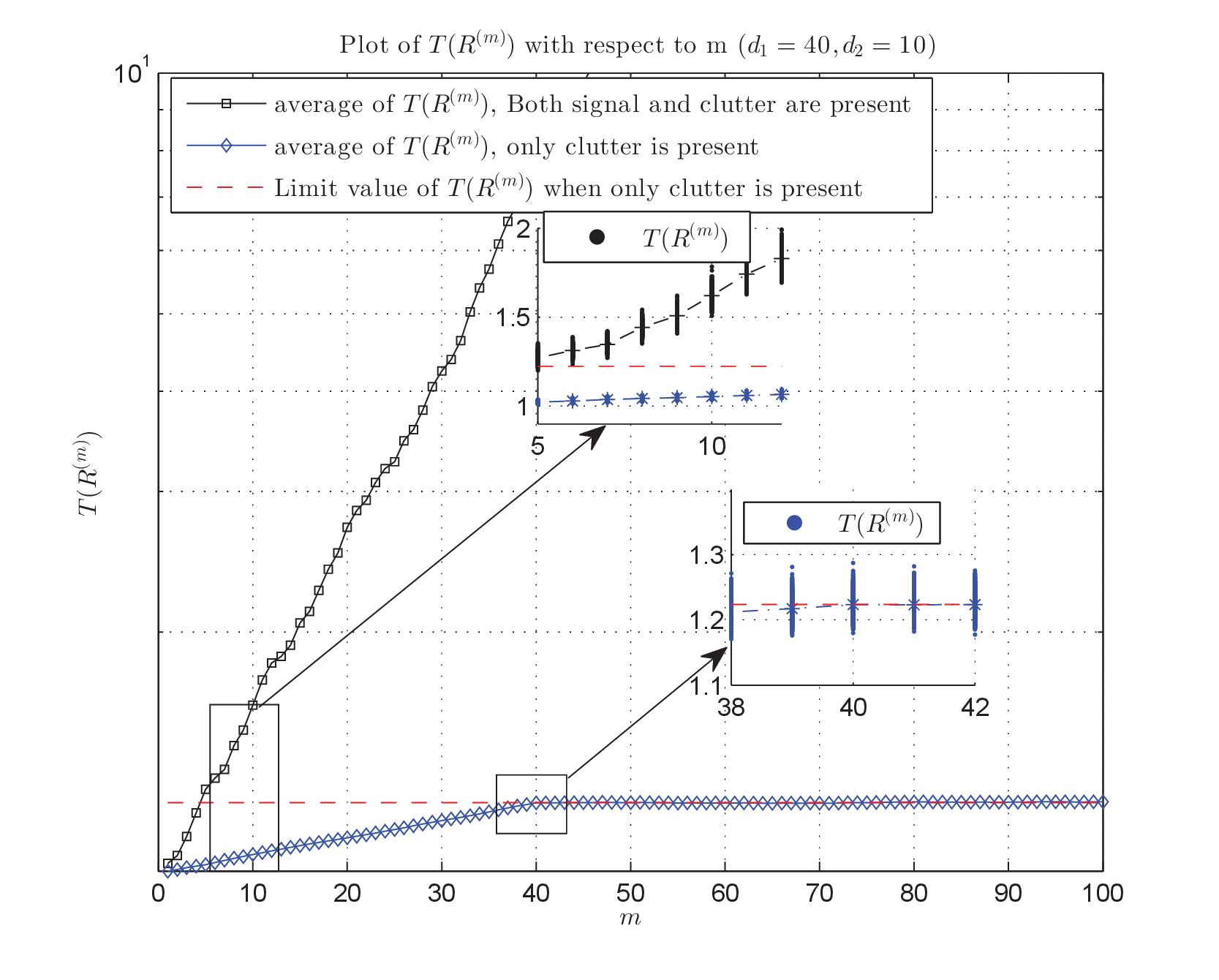}
	\caption{Simulation of the test quantity $T(\bm R^{(m)})$ in noisy environment}
	\label{figure3}
\end{figure}

\begin{figure}[htbp]
\centering
\includegraphics*[width=0.9\textwidth]{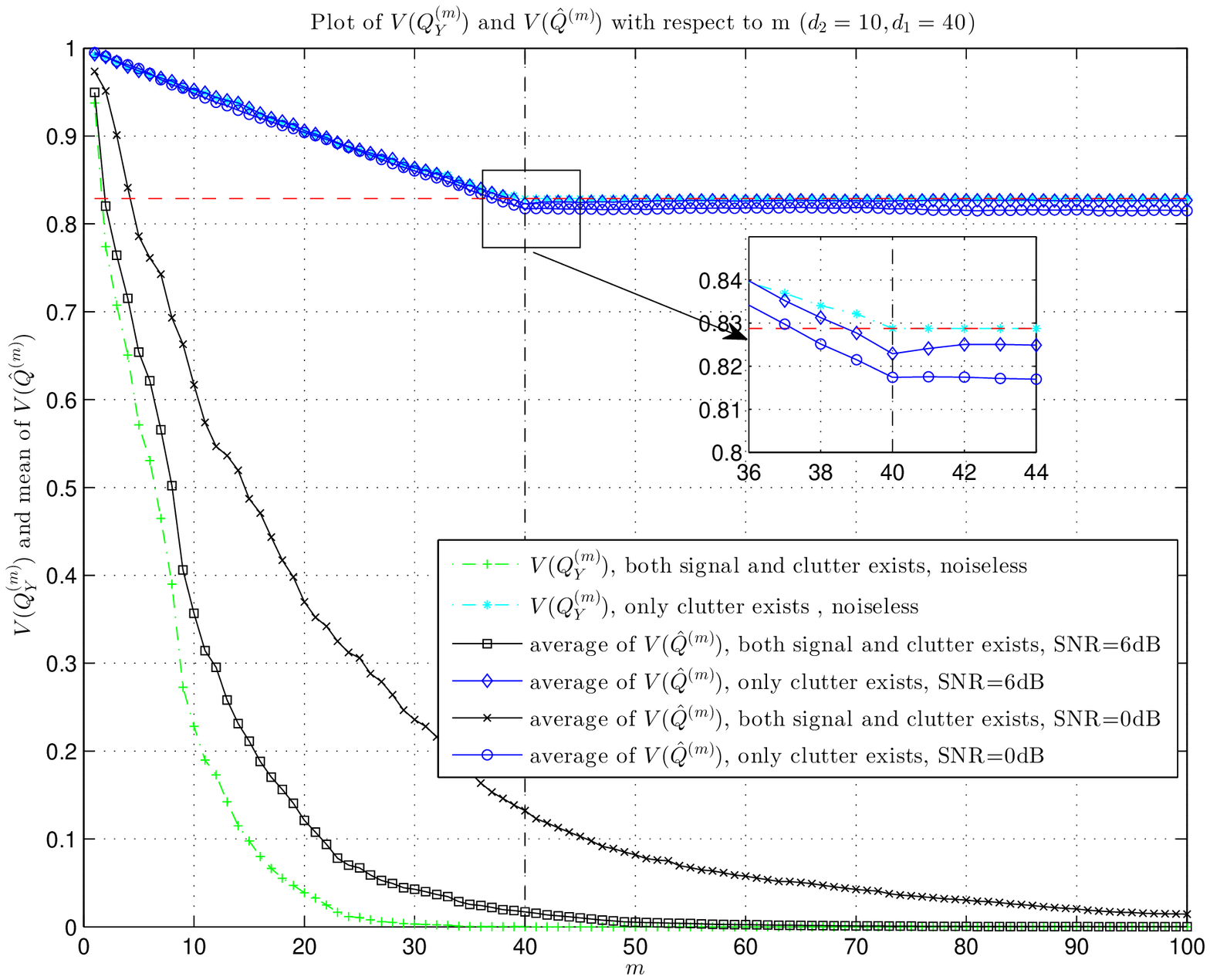}
\caption{Comparing the volume-based correlation $V(\hat{\bm Q}^{(m)})$ with respect to different SNR}
		\label{figure4}
\end{figure}

\section{Further Discussion \& Future Work}

\subsection{Volume and Inner Product}

Inner product is widely used for signal correlation under the theoretical framework of linear space, where certain signal sample is regarded as a point in this linear space. The matched filter is well-known for it has the optimal output SNR, and it is actually constructed based on inner-product. So inner product has long been regarded as a powerful tool for signal processing and received much attention. However, another geometrical quantity with close relationship with inner product still lies out of sight and receives little concern. That is volume, the focus of this paper.

The connection between inner product and volume could be seen from a simple fact about complex number multiplication. For two complex numbers $z_1=x_1+jy_1$, $z_2=x_2+jy_2$, where $j=\sqrt{-1}$, we have
\begin{equation}\label{cplx1}
z_1\overline{z_2}=(x_1x_2+y_1y_2)+j(x_1y_2-x_2y_1),
\end{equation}
Here the real part of (\ref{cplx1}) is just the inner product between vectors $\overrightarrow{x}=(x_1,y_1)$ and $\overrightarrow{y}=(x_2,y_2)$. How about the imaginary part? We have
\begin{equation}
x_1y_2-x_2y_1=\det\left(\begin{array}{cc}x_1&y_1\\x_2&y_2\end{array}\right)=\overrightarrow{x}\times\overrightarrow{y},
\end{equation}
It is exactly the outer (cross) product of $\overrightarrow{x}$ and $\overrightarrow{y}$, that is, volume of parallelogram spanned by them. Intuitively, the angle between two vectors could be introduced naturally using inner and outer product. The inner product represents cosine of angle and outer product means sine of angle. The judgement on the extent of collinearity between two vectors can be made based on inner product and cross product as well. The difference just lies at that the typical value of perfect collinearity is $1$ for inner product and $0$ for cross product. If we are considering two vectors, the volume and cross product coincides with each other occasionally. Therefore, inner product and volume are two sides of the same diamond.

In the multiple-vector case, i.e., when we are considering more than three vectors, the definition of cross product between multiple vectors becomes vague but that of volume is still making sense. To determine whether or not a vector $\overrightarrow{a}$ lies in a certain $d$-dimensional subspace $\bm H_S$ spanned by a group of vectors is more difficult because it doesn't work by calculating the inner products between $\overrightarrow{a}$ and each basis vector of subspace $\bm H_S$. Nevertheless, volume provides us a clue to solve this problem effectively. According to the volume's definition (\ref{VolumeDef1}), we only need to compute the $d+1$-dimensional volume of subspace spanned by $\overrightarrow{a}$ and basis vectors of $\bm H_S$ to inspect its vanishment. Volume can be used as a criterion for coplanarity, like the role inner-product plays for collinearity.

\subsection{Volume Correlation and Coherence}

In the recent plenary talk " Coherence as an Organizing Principle in Signal processing " in SSP'2014, Louis Scharf emphasized the importance of theory and application of " Coherence " between two random vectors, this "coherence" has been used in various signal processing applications \cite{Ramirez2010}\cite{klausner2014detection}. Consider random vectors $\bm{u}\in\mathbb{C}^p$ and $\bm{v}\in\mathbb{C}^q$, denote their cross covariance matrix by
\begin{equation}
\bm{R}=\left(\begin{array}{cc}\bm{R}_{\bm{uu}}&\bm{R}_{\bm{uv}}\\\bm{R}_{\bm{uv}}^{H}&\bm{R}_{\bm{vv}}\end{array}\right)
\end{equation}
the coherence is defined as
\begin{equation}\label{coh1}
\rho^2(\bm{u},\bm{v})=1-\frac{\displaystyle\det(\bm{R}_{\bm{uu}}-\bm{R}_{\bm{uv}}\bm{R}_{\bm{vv}}^{-1}\bm{R}_{\bm{uv}}^{H})}{\displaystyle\det(\bm{R}_{\bm{uu}})}
\end{equation}
On the other hand, the deterministic version of coherence called Euclidean coherence can also be defined for two rectangle matrix $\bm{U}\in\mathbb{C}^{n\times{p}}$ and $\bm{V}\in\mathbb{C}^{n\times{q}}$, which could be regarded as the samples of random vectors $\bm{u}$ and $\bm{v}$ respectively. Suppose Grammian matrix of $\bm{U}$ and $\bm{V}$ be
\begin{equation}
\bm{R}=\left(\begin{array}{cc}\bm{U}^{H}\bm{U}&\bm{U}^{H}\bm{V}\\\bm{V}^{H}\bm{U}&\bm{V}^{H}\bm{V}\end{array}\right)
\end{equation}
then the coherence may be defined as well,
\begin{equation}\label{coh2}
\hat{\rho}^2(\bm{U},\bm{V})=1-\frac{\displaystyle\det(\bm{U}^{H}\bm{U}-\bm{U}^{H}\bm{P}_{\bm{V}}\bm{U})}{\displaystyle\det(\bm{U}^{H}\bm{U})}
\end{equation}
where $\bm{P}_{\bm{V}}=\bm{V}(\bm{V}^{H}\bm{V})^{-1}\bm{V}^{H}$ is the projection matrix on subspace spanned by column of $\bm{V}$. Obviously, (\ref{coh2}) is one kind of finite sample estimation of (\ref{coh1}).

From fine-grain viewpoint, we can write Hilbert coherence as
\begin{align}
\rho^2(\bm{u},\bm{v})&=1-\frac{\displaystyle\det(\bm{R}_{\bm{uu}}-\bm{R}_{\bm{uv}}\bm{R}_{\bm{vv}}^{-1}\bm{R}_{\bm{uv}}^{H})}{\displaystyle\det(\bm{R}_{\bm{uu}})}\nonumber\\
&=1-\prod_{i=1}^{\min(p,q)}(1-k_i^2)
\end{align}
where $k_i$ are singular values of the coherence matrix $\bm{R}_{\bm{uu}}^{-1/2}\bm{R}_{\bm{uv}}\bm{R}_{\bm{vv}}^{-H/2}$. They are called canonical correlations \cite{scharf2000canonical} between the canonical coordinates of random variable $\bm{u}$ and $\bm{v}$. Euclidean coherence can be treated in the same way,
\begin{align}
\hat{\rho}^2(\bm{U},\bm{V})&=1-\frac{\displaystyle\det(\bm{U}^{H}\bm{U}-\bm{U}^{H}\bm{P}_{\bm{V}}\bm{U})}{\displaystyle\det(\bm{U}^{H}\bm{U})}\nonumber\\
&=1-\prod_{i=1}^{\min(p,q)}(1-\cos^2\theta_i(\Span(\bm{U}),\Span(\bm{V})))
\end{align}
where $\{\theta_i(\Span(\bm{U}), \Span(\bm{V})),1\leq{i}\leq\min(p,q)\}$ are the principal angles of subspace $\Span(\bm{U})$ and $\Span(\bm{V})$.

In view of (\ref{VolAng}), we have
\begin{align}
\hat{\rho}^2(\bm{U},\bm{V})&=1-\prod_{i=1}^{\min(p,q)}\sin^2\theta_i(\Span(\bm{U}),\Span(\bm{V}))\nonumber\\
&=1-(\Corr_{\textrm{vol}}(\Span(\bm{U}),\Span(\bm{V})))^2,
\end{align}
This is a interesting relationship between Euclidean coherence and the volume correlation (\ref{VolAng}) defined in this paper. Hence the potential value of volume correlation in signal processing field will be highlighted generally along with widespread of coherence, just as predicted by Scharf.

\subsection{The Phase Transition Phenomenon of random matrices and Future works}

In the regime of ultimate low SNR, It have been shown that significant errors will enter the estimation of signal subspaces obtained with the eigen-decomposition methods. This is due to the so-called Phase Transition Phenomenon on eigenvalues and eigenvectors of random matrices \cite{watkin1994optimal}\cite{paul2007asymptotics}\cite{paul2013random}. In fact, for a $n \times n$ sampled covariance matrix calculated in (\ref{CorrMat}), as $n\rightarrow\infty$ and $n/m\rightarrow\gamma\in(0,1)$, we have:
\par
(a) If $\lambda_i /\sigma^2 > 1+\sqrt{\gamma}$, then
\begin{eqnarray}
\hat{\lambda}^{(m)}_i &\stackrel{a.s.}{\rightarrow}& \lambda_i (1+\frac{\gamma\sigma^2}{\lambda_i-\sigma^2}),\quad i=1,\cdots k_m \\
|\langle \hat{\bm q}^{(m)}_i ,\bm q_i\rangle | &\stackrel{a.s.}{\rightarrow}&  \sqrt{\left( 1-\frac{\gamma\sigma^4}{(\lambda_i-\sigma^2)^2} \right)/\left(1+\frac{\gamma \sigma^2}{\lambda_i-\sigma^2} \right)},\quad i=1,\cdots k_m.
\end{eqnarray}
\par
(b) If $\lambda_i /\sigma^2 \leq 1+\sqrt{\gamma}$, then
\begin{eqnarray}
\hat{\lambda}^{(m)}_i &\stackrel{a.s.}{\rightarrow}& \sigma^2(1+\sqrt{\gamma})^2,\quad i=1,\cdots k_m \\
|\langle \hat{\bm q}^{(m)}_i ,\bm q_i\rangle | &\stackrel{a.s.}{\rightarrow}&  0,\quad i=1,\cdots k_m,
\end{eqnarray}
where  $\lambda_i$ (in descending order with increase of $i$) and $\bm q_i$ are eigenvalues and eigenvectors of the auto-correlation matrix of the received signal, and $\hat{\lambda}^{(m)}_i$ and $ \hat{\bm q}^{(m)}_i$ are the corresponding estimation of $\lambda_i$ and $\bm q_i$ calculated from eigenvalue decomposition of the sampled covariance matrix.

This phase transition phenomenon means that when $\lambda_i/\sigma^2$ was lower than a threshold, the estimated eigenvalues $\hat{\lambda}^{(m)}_i$ and eigenvectors $ \hat{\bm q}^{(m)}_i$ will resembles the eigenvalues and eigenvectors of the noise subspace almost surely. On the other hand, if the noise variance $\sigma^2$ was significantly high, such that $\lambda_i/\sigma^2$ is lower than $1+\sqrt{\gamma}$, then the estimated signal subspace $\Span(\hat{\bm Q}^{(m)})$ will not asymptotically converge to the true signal subspace $\Span(\bm Q_{SC})$ as expected. Therefore the performance of volume-correlation will degrade heavily. As a whole, the signal subspace estimation method in Detector 2 has its shortage for low SNR. The estimation of signal subspace in low SNR scenario remains an important future work for our volume-correlation subspace detector.

\section{Conclusion}

In this paper, we proposed a novel volume-correlation subspace detector and use it to treat the problem of subspace signal detection with noise and clutter. The proposed detector can effectively detect target signal in certain subspace by calculating the volume of parallelotope spanned by the basis of known target signal subspace together with the multi-dimensional observations of the received signal. It is shown theoretically that the detector can eliminate the influence of clutter while detecting without any knowledge on the clutter subspace, we call this unique property as " Detection while Learning ". Numerical simulations demonstrated the advantage and excellent performance of volume-correlation subspace detector.

\appendices

\section{Proof of Theorem \ref{H1Thm1}}

We will prove the monotone property (\ref{Monotone}). An useful lemma will be proved firstly. It is just a simple property with very clear geometric intuition for volume of subspaces.

\begin{Lemma}\label{Lemma1}
Let $\bm{Y}^{(m)}=[\bm{y}_1,\cdots,\bm{y}_m]\in\mathbb{R}^{n\times{m}}$ and $\bm{X}\in\mathbb{R}^{n\times{l}}$ be two matrices, then
\begin{equation}
  \Vol_{m+l}([\bm{X},\bm{Y}^{(m)}])=\Vol_{m-1+l}([\bm{X},\bm{Y}^{(m-1)}])\|\bm{P}_{[\bm{X},\bm{Y}^{(m-1)}]}^{\perp}\bm{y}_m\|
\end{equation}
where $\bm{P}_{\bm{A}}^{\perp}$ is the projection matrix onto the orthogonal complement of column space of matrix $\bm{A}$.
\end{Lemma}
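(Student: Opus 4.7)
The plan is to reduce the identity to a block-matrix determinant computation via the Gram formulation $\Vol_d(M)=\sqrt{\det(M^T M)}$ given in (\ref{VolumeDef2}), which applies whenever $M$ has full column rank. The geometric content of the claim is the classical fact that appending one more edge to a parallelotope scales its volume by the length of the component of that edge orthogonal to the existing face; my strategy is simply to execute this as a block-determinant calculation.

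First I would set $A:=[\bm X,\bm Y^{(m-1)}]$, so that $[\bm X,\bm Y^{(m)}]=[A,\bm y_m]$. Forming the $(m+l)\times(m+l)$ Gram matrix gives the block structure
\[
[A,\bm y_m]^T[A,\bm y_m]=\begin{pmatrix} A^T A & A^T\bm y_m \\ \bm y_m^T A & \bm y_m^T\bm y_m \end{pmatrix}.
\]
Assuming $A$ has full column rank, $A^T A$ is invertible and the Schur-complement identity for block determinants yields
\[
\det\bigl([A,\bm y_m]^T[A,\bm y_m]\bigr)=\det(A^T A)\cdot\bigl(\bm y_m^T\bm y_m-\bm y_m^T A(A^T A)^{-1}A^T\bm y_m\bigr).
\]

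The key observation is that the scalar Schur complement equals $\|\bm P_A^\perp\bm y_m\|^2$: since the orthogonal projector onto $\Span(A)$ is $\bm P_A=A(A^T A)^{-1}A^T$, the bracket is $\bm y_m^T(\bm I-\bm P_A)\bm y_m=\bm y_m^T\bm P_A^\perp\bm y_m$, which by idempotency and symmetry of $\bm P_A^\perp$ equals $\|\bm P_A^\perp\bm y_m\|^2$. Taking square roots of the determinant identity then recovers exactly $\Vol_{m+l}([\bm X,\bm Y^{(m)}])=\Vol_{m-1+l}([\bm X,\bm Y^{(m-1)}])\cdot\|\bm P_{[\bm X,\bm Y^{(m-1)}]}^\perp\bm y_m\|$.

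The only residual obstacle is the rank-deficient case, where $A^T A$ is singular and the Schur-complement step does not apply directly. I would dispatch it by appealing to Lemma 2: if $\Rank(A)<m-1+l$ the right-hand side vanishes, and since $\Rank([A,\bm y_m])\leq\Rank(A)+1<m+l$ the left-hand side vanishes as well, so the identity holds trivially. Alternatively one can argue by continuity, perturbing $A$ by $\varepsilon E$ with $E$ chosen to restore full rank and sending $\varepsilon\to0$. Thus the main conceptual step is the Schur-complement identification with the projection norm; everything else is bookkeeping.
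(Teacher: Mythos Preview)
Your proposal is correct and follows essentially the same route as the paper: form the block Gram matrix, apply the Schur-complement determinant identity, and recognize the scalar Schur complement as $\bm y_m^T(\bm I-\bm P_A)\bm y_m=\|\bm P_A^\perp\bm y_m\|^2$ via idempotency. Your explicit handling of the rank-deficient case (which the paper's proof omits) is a nice addition.
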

\begin{proof}
  According to the definition of volume in (\ref{VolumeDef2}), we have
  \begin{align}
    \Vol_{m+l}^2([\bm{X},\bm{Y}^{(m)}])&=\det([\bm{X},\bm{Y}^{(m)}]^{T}[\bm{X},\bm{Y}^{(m)}])\nonumber\\
    &=\det\left(
           \begin{array}{cc}
           \bm{X}^{T}\bm{X}&\bm{X}^{T}\bm{Y}^{(m)}\\
           (\bm{Y}^{(m)})^T\bm{X}&(\bm{Y}^{(m)})^T\bm{Y}^{(m)}
           \end{array}
           \right)\nonumber\\
    &=\det\left(
           \begin{array}{cc:c}
           \bm{X}^{T}\bm{X}&\bm{X}^{T}\bm{Y}^{(m-1)}&\bm{X}^{T}\bm{y}^{(m)}\\
           (\bm{Y}^{(m-1)})^T\bm{X}&(\bm{Y}^{(m-1)})^T\bm{Y}^{(m-1)}&(\bm{Y}^{(m-1)})^T\bm{y}_m\\
           \hdashline
           \bm{y}_m^T\bm{X}&\bm{y}_m^T\bm{Y}^{(m-1)}&\bm{y}_m^T\bm{y}_m\\
           \end{array}
           \right)
  \end{align}
  Using Schur complement formula,
  \begin{equation}
    \det\left(
        \begin{array}{cc}
        \bm{A}&\bm{B}\\
        \bm{C}&\bm{D}
        \end{array}
        \right)
    =\det(\bm{A})\det(\bm{D}-\bm{B}\bm{A}^{-1}\bm{C}),
  \end{equation}
  we obtain
  \begin{align}
    \Vol_{m+l}^2([\bm{X},\bm{Y}^{(m)}])
    =&\det\left(
          \begin{array}{cc}
          \bm{X}^{T}\bm{X}&\bm{X}^{T}\bm{Y}^{(m-1)}\\
          (\bm{Y}^{(m-1)})^T\bm{X}&(\bm{Y}^{(m-1)})^T\bm{Y}^{(m-1)}
          \end{array}
          \right)\nonumber\\
      &\det\left(
          \bm{y}_m^T\bm{y}_m-\bm{y}_m^T[\bm{X},\bm{Y}^{(m-1)}]
           \left(
            \begin{array}{cc}
            \bm{X}^{T}\bm{X}&\bm{X}^{T}\bm{Y}^{(m-1)}\\
            (\bm{Y}^{(m-1)})^T\bm{X}&(\bm{Y}^{(m-1)})^T\bm{Y}^{(m-1)}
            \end{array}
           \right)^{-1}
           [\bm{X},\bm{Y}^{(m-1)}]^T\bm{y}_m
          \right)\nonumber\\
    =&\Vol_{m-1+l}^2([\bm{X},\bm{Y}^{(m-1)}])\bm{y}_m^{T}(\bm{I}-\bm{P}_{[\bm{X},\bm{Y}^{(m-1)}]})\bm{y}_m,
  \end{align}
  Because $\bm{I}-\bm{P}_{[\bm{X},\bm{Y}^{(m-1)}]}$ is an idempotent matrix,
  \begin{equation}
    (\bm{I}-\bm{P}_{[\bm{X},\bm{Y}^{(m-1)}]})^2=\bm{I}-\bm{P}_{[\bm{X},\bm{Y}^{(m-1)}]},
  \end{equation}
  we obtain
  \begin{align}
    \Vol_{m+l}^2([\bm{X},\bm{Y}^{(m)}])
    &=\Vol_{m-1+l}^2([\bm{X},\bm{Y}^{(m-1)}])\bm{y}_k^{T}(\bm{I}-\bm{P}_{[\bm{X},\bm{Y}^{(m-1)}]})^2\bm{y}_m\nonumber\\
    &=\Vol_{m-1+l}^2([\bm{X},\bm{Y}^{(m-1)}])\|\bm{P}_{[\bm{X},\bm{Y}^{(m-1)}]}^{\perp}\bm{y}_m\|^2
  \end{align}
  This is just what we want to prove.
\end{proof}
\par

With the notations in theorem \ref{H1Thm1}, when $m<\Dim(\bm{H}_S\oplus\bm{H}_C)$ (or $m<\Dim(\bm{H}_C)$), we have
\begin{equation}
V(\bm{Y}^{(m)})=\Vol_{m+d_2}([\bm{Q}_{S}, \bm{Q}_{\bm{Y}}^{(m)}]),
\end{equation}
where $\bm{Q}_{\bm{Y}}^{(m)}=[\bm{q}_{\bm{Y}}^{(1)},\bm{q}_{\bm{Y}}^{(2)},\cdots,\bm{q}_{\bm{Y}}^{(m)}]$ is the orthogonal basis matrix of $\Span(\bm{Y}^{(m)})$. Moreover, if we write
\begin{equation}
\bm{Q}_{\bm Y^{(m)}}=[\bm{Q}_{\bm Y^{(m-1)}}, \bm{q}_{\bm{Y}}^{(m)}],
\end{equation}
then according to the way we construct $\bm{Q}_{\bm{Y}}^{(m)}$, we have
\begin{align}
\bm{q}_{\bm{Y}}^{(m)}
&=\frac{(\bm{I}-\bm{Q}_{\bm{Y}}^{(m-1)}(\bm{Q}_{\bm{Y}}^{(m-1)})^T)\bm{y}_{m}}
       {\|(\bm{I}-\bm{Q}_{\bm{Y}}^{(m-1)}(\bm{Q}_{\bm{Y}}^{(m-1)})^T)\bm{y}_{m}\|}\nonumber\\
&=\frac{\bm{P}_{\bm{Q}_{\bm{Y}}^{(m-1)}}^{\perp}\bm{y}_{m}}{\|\bm{P}_{\bm{Q}_{\bm{Y}}^{(m-1)}}^{\perp}\bm{y}_{m}\|}.
\end{align}
Using Lemma \ref{Lemma1}, Let $\bm{X}=\bm{Q}_S$, $\bm{Y}=\bm{Q}_{\bm{Y}}^{(m)}$, we obtain
\begin{align}
  V(\bm{Y}^{(m)})&=\Vol_{d_2+m-1}([\bm{Q}_{S},\bm{Q}_{\bm{Y}}^{(m-1)}])\|\bm{P}_{[\bm{Q}_S,\bm{Q}_{\bm{Y}}^{(m-1)}]}^{\perp}\bm{q}_{\bm{Y}}^{(m)}\|\\
  &=V(\bm{Y}^{(m-1)})\|\bm{P}_{[\bm{Q}_S,\bm{Q}_{\bm{Y}}^{(m-1)}]}^{\perp}\bm{q}_{\bm{Y}}^{(m)}\|.
\end{align}
Take into account the property of projection matrices,
\begin{equation}
  \|\bm{P}_{[\bm{Q}_S,\bm{Q}_{\bm{Y}}^{(m-1)}]}^{\perp}\bm{q}_{\bm{Y}}^{(m)}\|\leq\|\bm{q}_{\bm{Y}}^{(m)}\|=1,
\end{equation}
we have
\begin{equation}
  V(\bm{Y}^{(m)})\leq{V(\bm{Y}^{(m-1)})}.
\end{equation}
Thus (\ref{Monotone}) and theorem \ref{H1Thm1} has been proven.

\section{Proof of Theorem \ref{H0Thm2}}

Firstly, we prove the necessary part of this theorem. 
Let $\bm{H}_S$ and $\bm{H}_C$ be target subspace and clutter subspace respectively. $d_2 = \Dim(\bm{H}_S)$, $d_1=\Dim(\bm{H}_C)$. Assume there exists target signal in received data $\{\bm{y}_i, i=1,\cdots,m\}$, that is to say,
\begin{equation}
\bm{y}_i=\bm s_i+\bm c_i,\quad \bm s_i \in \bm H_S, \bm c_i \in \bm H_C
\end{equation}
for some $i\in{S}\subset\{1,2,\cdots,m\}$, and
\begin{equation}
\bm{y}_i=\bm c_i,
\end{equation}
for other $i\in\{1,2,\cdots,m\}\setminus{S}$. Then under the generic hypothesis, we have
\begin{equation}
  \Rank([\bm{y}_1,\cdots,\bm{y}_m])=m,
\end{equation}
for $m\leq{d_1}$. Hence the result of successive orthogonalization could be written as
\begin{equation}
  \bm{Q}_{\bm{Y}}^{(m)}=[\bm{q}_{\bm{Y}}^{(1)},\bm{q}_{\bm{Y}}^{(2)},\cdots,\bm{q}_{\bm{Y}}^{(m)}],
\end{equation}
It should be stressed that in the case of $m=d_1+1$, we still have
\begin{equation}
  \Rank([\bm{y}_1,\cdots,\bm{y}_{d_1+1}])=d_1+1,
\end{equation}
because of the presence of target signal. In other words,
\begin{equation}\label{Ortho1}
  \bm{Q}_{\bm{Y}}^{(d_1+1)}=[\bm{q}_{\bm{Y}}^{(1)},\bm{q}_{\bm{Y}}^{(2)},\cdots,\bm{q}_{\bm{Y}}^{(d_1)},\bm{q}_{\bm{Y}}^{(d_1+1)}].
\end{equation}
For $\bm{Y}^{(d_1+1)}$, all of its $d_1+1$ linearly independent directions includes two parts, one with $d_1$ directions coming from clutter subspace $\bm{H}_C$ and the other one direction contributed by target subspace $\bm{H}_S$.

Let $\bm{Q}_S$ be the matrix with columns being the orthonormal basis vectors of $\bm{H}_S$,  (\ref{Ortho1}) means that
\begin{equation}
  \Rank([\bm{Q}_S,\bm{Q}_{\bm{Y}}^{(d_1+1)}])=d_2+d_1,
\end{equation}
but the number of nonzero columns of $[\bm{Q}_S,\bm{Q}_{\bm{Y}}^{(d_1+1)}]$ is $d_1+d_2+1$, so using Lemma \ref{lemma2}, we obtain
\begin{equation}
  \Vol_{d_1+d_2+1}([\bm{Q}_S,\bm{Q}_{\bm{Y}}^{(d_1+1)}])=0,
\end{equation}
Take $K=d_1+1$, the necessary part of theorem has been proved.

On the contrary, under the generic hypothesis, we need to prove that, if there exists $K$ such that
\begin{equation}\label{SigEx}
  \Vol_{K+d_2}([\bm{Q}_S,\bm{Q}_{\bm{Y}}^{(K)}])=0,
\end{equation}
then there must be target signal in sample data ${\bm{y}_1,\cdots,\bm{y}_K}$.

Assume this was not the case, then every sample $\bm{y}_i$ contains no target signal, which means for $\forall m \in \mathbb{N}$
\begin{equation}
  \Span(\bm{Y}^{(m)})=\Span\{\bm{y}_1,\cdots,\bm{y}_m\}\subset\bm{H}_C,
\end{equation}
therefore the orthonormal basis matrix $\bm{Q}_{\bm{Y}}^{(m)}$ of $\bm{Y}^{{m}}$ satisfies
\begin{equation}
  \bm{Q}_{\bm{Y}}^{(m)}=[\bm{q}_{\bm{Y}}^{(1)},\bm{q}_{\bm{Y}}^{(2)},\cdots,\bm{q}_{\bm{Y}}^{(m)}]
\end{equation}
for $m\leq{d_1}$ and
\begin{equation}
  \bm{Q}_{\bm{Y}}^{(m)}=[\bm{q}_{\bm{Y}}^{(1)},\bm{q}_{\bm{Y}}^{(2)},\cdots,\bm{q}_{\bm{Y}}^{(d_1)}]
\end{equation}
for $m>d_1$. When $m > d_1$, according to (\ref{CorrVol}) and (\ref{VolAng}), we obtain
\begin{align}
  \Vol_{d_1+d_2}([\bm{Q}_S,\bm{Q}_{\bm{Y}}^{(m)}])
  &=\frac{\Vol_{d_1+k_2}([\bm{Q}_S,\bm{Q}_{\bm{Y}}^{(m)}])}{\Vol_{d_2}(\bm{Q}_S)\Vol_{d_1}(\bm{Q}_{\bm{Y}}^{(m)})}\nonumber\\
  &=\Corr_{\textrm{vol}}(\bm{Q}_S,\bm{Q}_{\bm{Y}}^{(m)})\nonumber\\
  &=\prod_{j=1}^{\min(d_1,d_2)}\sin\theta_j(\bm{H}_S, \bm{H}_C)\nonumber\\
  &>0.
\end{align}
Considering the monotone relation (\ref{Monotone}), we have
\begin{equation}
  \Vol_{m+d_2}([\bm{Q}_S,\bm{Q}_{\bm{Y}}^{(m)}])>0,\quad{\forall m\in\mathbb{N}},
\end{equation}
Contradiction! We have verified the sufficient part and the whole theorem has been proved.

\section{Proof of Theorem \ref{H1Thm1N}}

To complete the proof of Theorem \ref{H1Thm1N}, several lemmas are required as necessary tools. These lemmas concerned with asymptotical distribution of eigenvectors of sample covariance matrix, concentration bounds and matrix perturbation.

\begin{Lemma}\cite{stoica1989music}\label{stoica}
Consider the matrix $\bm{\hat{Q}}^{(m)}\in\mathbb{R}^{n\times{r}}$ with columns being the $r$ eigenvectors of sample covariance matrix $\hat{\mathrm{R}}^{(m)}\in\mathbb{R}^{n\times{n}}$ calculated from $m$ samples, corresponding to the largest $r$ eigenvalues, i.e.,
$$
\bm{\hat{Q}}^{(m)}=[\bm{\hat{q}}_1^{(m)}, \bm{\hat{q}}_2^{(m)},\cdots,\bm{\hat{q}}_r^{(m)}],
$$
then $\bm{\hat{Q}}^{(m)}$'s asymptotic distribution (for large $m$) is jointly Gaussian with mean
$$
\bm{Q}=[\bm{q}_1,\bm{q}_2,\cdots,\bm{q}_r],
$$
and covariance $\bm \Sigma_1^{(m)},\cdots,\bm \Sigma_r^{(m)}$, where
\begin{equation}
\bm \Sigma_i^{(m)}:= \frac{\lambda_i}{m}\Big[\sum_{\stackrel{j=1}{j \neq i}}^r\frac{\lambda_j}{(\lambda_i-\lambda_j)^2}\bm{q}_j\bm{q}_j^T + \sum_{j=r+1}^{P}\frac{\sigma^2}{(\sigma^2-\lambda_i)^2}\bm{q}_j\bm{q}_j^T\Big],\quad i=1,\cdots,r
\end{equation}
and
\begin{equation}\label{perturbcov}
\mathbb{E}(\bm{\hat{q}}_i^{(m)}-\bm{q}_i)(\bm{\hat{q}}_k^{(m)}-\bm{q}_k)^T=\bm\Sigma_i^{(m)}\cdot\delta_{i,k},\quad i,k=1,\cdots,r
\end{equation}
where $\lambda_1\geq\lambda_2\geq\cdots\lambda_r\geq\lambda_{r+1}=\cdots=\lambda_n=\sigma^2$ are eigenvalues of the covariance matrix $\mathrm{R}_r$ in (\ref{CorrMatrix}), with $\bm{q}_1,\cdots,\bm{q}_n$ the corresponding eigenvectors.
\end{Lemma}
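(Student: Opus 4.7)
The plan is to combine the multivariate central limit theorem for sample covariance matrices with first-order perturbation theory for eigenvectors. Write $\hat{\mathrm{R}}^{(m)} = \mathrm{R}_{\bm r} + \bm{\Delta}^{(m)}$ where $\bm{\Delta}^{(m)} := \hat{\mathrm{R}}^{(m)} - \mathrm{R}_{\bm r}$. Under the Gaussian model with correlation matrix $\mathrm{R}_{\bm r}$, the entries of $\sqrt{m}\,\bm{\Delta}^{(m)}$ are asymptotically jointly Gaussian with zero mean by the CLT, and the fourth-order moments needed later can be computed by Isserlis' (Wick's) theorem. In particular, for any deterministic unit vectors $\bm{u},\bm{v},\bm{a},\bm{b}$,
\begin{equation}
m\,\mathbb{E}\{(\bm{u}^T\bm{\Delta}^{(m)}\bm{v})(\bm{a}^T\bm{\Delta}^{(m)}\bm{b})\} \to (\bm{u}^T\mathrm{R}_{\bm r}\bm{a})(\bm{v}^T\mathrm{R}_{\bm r}\bm{b}) + (\bm{u}^T\mathrm{R}_{\bm r}\bm{b})(\bm{v}^T\mathrm{R}_{\bm r}\bm{a}).
\end{equation}

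Next, I would expand each perturbed eigenvector to first order. Because the signal eigenvalues $\lambda_1>\cdots>\lambda_r$ are strictly larger than $\sigma^2$ and by assumption simple, classical analytic perturbation theory (e.g., Kato) gives
\begin{equation}
\hat{\bm{q}}_i^{(m)} - \bm{q}_i = \sum_{j\neq i}\frac{\bm{q}_j^T\bm{\Delta}^{(m)}\bm{q}_i}{\lambda_i - \lambda_j}\,\bm{q}_j + O_p(\|\bm{\Delta}^{(m)}\|^2),
\end{equation}
with the convention $\lambda_j = \sigma^2$ for $j>r$. Asymptotic Gaussianity of $\hat{\bm{q}}_i^{(m)}$ then follows from the asymptotic Gaussianity of $\bm{\Delta}^{(m)}$ together with the continuous mapping theorem applied to the linear functional above, the quadratic remainder being $O_p(1/m)$.

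The covariance structure is then obtained by plugging the expansion into $\mathbb{E}(\hat{\bm{q}}_i^{(m)}-\bm{q}_i)(\hat{\bm{q}}_k^{(m)}-\bm{q}_k)^T$ and using the Wick identity above. For $i = k$, the cross terms $\mathbb{E}\{(\bm{q}_j^T\bm{\Delta}^{(m)}\bm{q}_i)(\bm{q}_l^T\bm{\Delta}^{(m)}\bm{q}_i)\}$ vanish for $j\neq l$ because $\bm{q}_j,\bm{q}_i,\bm{q}_l$ are mutually orthogonal eigenvectors of $\mathrm{R}_{\bm r}$, which forces one of the two Wick pairings to be zero; what survives is
\begin{equation}
m\,\mathbb{E}\{(\bm{q}_j^T\bm{\Delta}^{(m)}\bm{q}_i)^2\}\to \lambda_i\lambda_j,
\end{equation}
producing the factor $\lambda_i\lambda_j/(\lambda_i-\lambda_j)^2$ in the sum. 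Splitting the sum at $j=r$ and collapsing $\lambda_j=\sigma^2$ for $j>r$ yields the displayed form of $\bm{\Sigma}_i^{(m)}$. A parallel computation for $i\neq k$ shows every Wick pairing forces a factor $\bm{q}_a^T\mathrm{R}_{\bm r}\bm{q}_b$ with $a\neq b$, which vanishes, proving $\mathbb{E}(\hat{\bm{q}}_i^{(m)}-\bm{q}_i)(\hat{\bm{q}}_k^{(m)}-\bm{q}_k)^T=\bm 0$ for $i\neq k$ and hence (\ref{perturbcov}).

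The main obstacle is the $(n-r)$-fold degeneracy of the noise eigenvalue $\sigma^2$: naive perturbation theory breaks down on the noise eigenspace because individual noise eigenvectors are not well defined. The cure is that only the projector $\bm{P}_N = \sum_{j>r}\bm{q}_j\bm{q}_j^T$ onto the noise subspace is well defined, and the first-order correction of a signal eigenvector $\hat{\bm{q}}_i^{(m)}$ depends on $\bm{\Delta}^{(m)}$ only through $\bm{P}_N\bm{\Delta}^{(m)}\bm{q}_i$, which is invariant under any choice of basis within the noise subspace. Making this rotation-invariance argument rigorous (for instance via resolvent/contour-integral representations of the spectral projector associated with the well-separated signal eigenvalues) is the technical core; once established, the contribution of the noise subspace to $\bm{\Sigma}_i^{(m)}$ collapses to the single sum $\sum_{j>r}\sigma^2/(\sigma^2-\lambda_i)^2\,\bm{q}_j\bm{q}_j^T$ as stated, and the remaining verification is the moment computation sketched above.
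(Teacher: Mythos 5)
Your overall route---first-order perturbation of the eigenvectors of $\hat{\mathrm{R}}^{(m)}$ combined with a CLT/Wick-moment computation for $\bm{\Delta}^{(m)}=\hat{\mathrm{R}}^{(m)}-\mathrm{R}_{\bm r}$---is the standard derivation of this result, and it is essentially how the source cited for the lemma obtains it (the paper itself offers no proof; it imports the statement from \cite{stoica1989music}). Your eigenvector expansion, your handling of the degenerate noise eigenvalue through the basis-invariance of $\bm{P}_N\bm{\Delta}^{(m)}\bm{q}_i$ (for $j>r$ the denominator is the constant $\lambda_i-\sigma^2$, so only the projector enters), and your diagonal-block computation producing $\bm{\Sigma}_i^{(m)}$ are all correct.

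However, your cross-covariance step ($i\neq k$) contains a genuine error. Apply your own Wick identity with $\bm{u}=\bm{q}_j$, $\bm{v}=\bm{q}_i$, $\bm{a}=\bm{q}_l$, $\bm{b}=\bm{q}_k$: it gives $m\,\mathbb{E}\{(\bm{q}_j^T\bm{\Delta}^{(m)}\bm{q}_i)(\bm{q}_l^T\bm{\Delta}^{(m)}\bm{q}_k)\}\to\lambda_i\lambda_j(\delta_{jl}\delta_{ik}+\delta_{jk}\delta_{il})$. For $i\neq k$ the second pairing survives at $j=k$, $l=i$ (admissible, since $j=k\neq i$ and $l=i\neq k$), and there both Wick factors are \emph{diagonal}: $(\bm{q}_k^T\mathrm{R}_{\bm r}\bm{q}_k)(\bm{q}_i^T\mathrm{R}_{\bm r}\bm{q}_i)=\lambda_k\lambda_i$. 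So your claim that every pairing forces a factor $\bm{q}_a^T\mathrm{R}_{\bm r}\bm{q}_b$ with $a\neq b$ is false, and the real-Gaussian computation actually yields
\begin{equation}
\mathbb{E}(\bm{\hat{q}}_i^{(m)}-\bm{q}_i)(\bm{\hat{q}}_k^{(m)}-\bm{q}_k)^T\;\approx\;-\frac{\lambda_i\lambda_k}{m(\lambda_i-\lambda_k)^2}\,\bm{q}_k\bm{q}_i^T\;\neq\;\bm{0},\qquad i\neq k,
\end{equation}
which is Anderson's classical real-case result. The vanishing cross-covariance asserted in (\ref{perturbcov}) is the \emph{complex circularly-symmetric} Gaussian result (Kaveh--Barabell, as used by Stoica--Nehorai): for circular data the Hermitian covariance $\mathbb{E}(\bm{\hat{q}}_i-\bm{q}_i)(\bm{\hat{q}}_k-\bm{q}_k)^H$ admits only one Wick pairing, which does vanish for $i\neq k$, while the complementary (transpose) covariance remains nonzero. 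As written over $\mathbb{R}$ your argument therefore cannot close: either the model must be taken complex circular Gaussian, or the $\delta_{i,k}$ claim must be weakened to the $O(1/m)$ nonzero cross term above---a change that matters downstream, since Lemma \ref{LemmaRMF} assumes the columns of $\bm{E}^{(m)}$ are uncorrelated across $i$.
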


\begin{Lemma}\label{LemmaRMF}
For the random matrix
\begin{equation}
\bm{E} = [\bm{e}_1,\cdots,\bm{e}_r]\in\mathbb{R}^{n\times{r}}
\end{equation}
where $\bm{e}_i\thicksim\mathcal{N}(0,\bm\Sigma_i), 1\leq{i}\leq{d}$, and $\mathbb{E}(\bm{e}_i\bm{e}_k^T)=\bm\Sigma_i\cdot\delta_{i,k}$, then for any
$0<\varepsilon<1$, there exists a constant $C>0$ that depends on $\bm\Sigma_i$, such that
\begin{equation}\label{concentration}
\|\bm{E}\|_F^2\leq(1+\varepsilon)\sum_{i=1}^r\Trace(\bm\Sigma_i),
\end{equation}
holds with probability
\begin{equation}\label{prob}
\mathbb{P}\geq1-\exp\{-\frac{r\cdot{n}\cdot\varepsilon^2}{C}\}.
\end{equation}
\end{Lemma}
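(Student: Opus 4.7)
\section*{Proof Proposal for Lemma 4}

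My plan is to rewrite $\|\bm{E}\|_F^2$ as a single Gaussian quadratic form and then invoke a standard Hanson--Wright / Laurent--Massart concentration inequality. Because the columns $\bm{e}_i$ are jointly Gaussian with $\mathbb{E}(\bm{e}_i\bm{e}_k^T)=\bm\Sigma_i\delta_{i,k}$, they are mutually independent. Writing $\bm{e}_i=\bm\Sigma_i^{1/2}\bm{g}_i$ with $\bm{g}_i\sim\mathcal{N}(0,\bm{I}_n)$ i.i.d., and stacking $\bm{g}=[\bm{g}_1^T,\ldots,\bm{g}_r^T]^T\sim\mathcal{N}(0,\bm{I}_{nr})$, we obtain
\begin{equation*}
\|\bm{E}\|_F^2=\sum_{i=1}^r\bm{g}_i^T\bm\Sigma_i\bm{g}_i=\bm{g}^T\bm{A}\bm{g},
\qquad
\bm{A}:=\mathrm{blkdiag}(\bm\Sigma_1,\ldots,\bm\Sigma_r)\in\mathbb{R}^{nr\times nr}.
\end{equation*}
The mean is $\mathbb{E}\|\bm{E}\|_F^2=\Trace(\bm{A})=\sum_{i=1}^r\Trace(\bm\Sigma_i)$, which is exactly the quantity appearing on the right-hand side of (\ref{concentration}).

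Next I would apply the standard one-sided tail bound for quadratic forms in a standard Gaussian vector (Hanson--Wright; equivalently Laurent--Massart when $\bm A$ is PSD): for any $t>0$,
\begin{equation*}
\mathbb{P}\!\left(\bm{g}^T\bm{A}\bm{g}\ge \Trace(\bm{A})+2\sqrt{\Trace(\bm{A}^2)\,t}+2\|\bm{A}\|_{\mathrm{op}}\,t\right)\le e^{-t}.
\end{equation*}
I would choose $t$ so that the deviation on the left equals $\varepsilon\,\Trace(\bm{A})$. For $0<\varepsilon<1$, the dominant constraint comes from the $\sqrt{\cdot}$ term, giving $t\asymp \varepsilon^2\,\Trace(\bm{A})^2/\Trace(\bm{A}^2)$, hence
\begin{equation*}
\mathbb{P}\!\left(\|\bm{E}\|_F^2\ge(1+\varepsilon)\Trace(\bm{A})\right)\le \exp\!\left(-c_0\,\varepsilon^2\,\frac{\Trace(\bm{A})^2}{\Trace(\bm{A}^2)}\right),
\end{equation*}
for an absolute constant $c_0>0$.

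Finally I would convert $\Trace(\bm{A})^2/\Trace(\bm{A}^2)$, the \emph{effective rank} of $\bm{A}$, into the required scaling $rn$. Since $\bm{A}$ has $rn$ non-negative eigenvalues (the union of spectra of the $\bm\Sigma_i$), Cauchy--Schwarz yields
\begin{equation*}
\frac{\Trace(\bm{A})^2}{\Trace(\bm{A}^2)}\ \ge\ rn\cdot\frac{\lambda_{\min}(\bm{A})^{\,2}_{\text{avg}}}{\lambda_{\max}(\bm{A})^{2}} ,
\end{equation*}
so as long as the $\bm\Sigma_i$ are non-degenerate with bounded condition numbers, this ratio is bounded below by $rn/\kappa$ where $\kappa$ depends only on the spectra of $\bm\Sigma_1,\ldots,\bm\Sigma_r$. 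Absorbing $c_0$ and $\kappa$ into a single constant $C>0$ depending on $\{\bm\Sigma_i\}$ yields exactly (\ref{prob}).

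I expect the main subtlety to lie in step three: passing from the Hanson--Wright bound, whose exponent naturally scales with the effective rank of $\bm{A}$, to the stated scaling $rn$. This requires the implicit assumption that no single $\bm\Sigma_i$ is rank-deficient or degenerate relative to the others, and that dependence is swept into $C$. Beyond that, the remaining steps are routine manipulations of Gaussian quadratic forms.
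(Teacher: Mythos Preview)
Your approach is correct and, at the structural level, the same as the paper's: write $\|\bm{E}\|_F^2$ as a weighted sum of independent $\chi^2$ variables (equivalently, a Gaussian quadratic form $\bm g^T\bm A\bm g$) with mean $\sum_i\Trace(\bm\Sigma_i)$, and then apply exponential concentration. The difference is only in packaging. The paper does not invoke Hanson--Wright or Laurent--Massart as a black box; instead it carries out the Chernoff argument by hand: it diagonalizes each $\bm\Sigma_i$, uses the Gaussian MGF $\mathbb{E}\exp(\lambda\tilde e_{i,j}^2)=(1-2\lambda\sigma_{i,j}^2)^{-1/2}$, and then bounds every factor uniformly via $\sigma_{\max}=\max_{i,j}\sigma_{i,j}$ and $\sigma_{\min}=\min_{i,j}\sigma_{i,j}$ to obtain a product of $rn$ identical terms, which is what produces the $rn$ in the exponent directly. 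Your route produces instead the effective rank $\Trace(\bm A)^2/\Trace(\bm A^2)$ and then converts it to $rn$ by a condition-number bound --- this is exactly the same $\sigma_{\min}/\sigma_{\max}$ dependence the paper absorbs into $C$, just obtained after rather than before the concentration step.

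Two small points. First, what you call ``Cauchy--Schwarz'' in step three is actually the opposite inequality (Cauchy--Schwarz gives the \emph{upper} bound $\Trace(\bm A)^2\le rn\,\Trace(\bm A^2)$); the lower bound you want follows from the elementary estimates $\Trace(\bm A)\ge rn\,\lambda_{\min}$ and $\Trace(\bm A^2)\le \lambda_{\max}\Trace(\bm A)$, giving $\Trace(\bm A)^2/\Trace(\bm A^2)\ge rn\,\lambda_{\min}/\lambda_{\max}$. Second, you correctly flag that this requires the $\bm\Sigma_i$ to be non-degenerate; the paper makes the same implicit assumption through $\sigma_{\min}>0$.
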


\begin{proof}
From the definition of Frobenius norm, we know that
\begin{equation}\label{FNorm}
\|\bm{E}\|_F^2 = \sum_{i=1}^r\|\bm{e}_i\|_2^2.
\end{equation}
For any $1\leq{i}\leq{r}$, $\bm{e}_i\thicksim\mathcal{N}(0,\bm\Sigma_i)$, the eigenvalue decomposition of $\bm\Sigma_i\in\mathbb{R}^{n\times{n}}$ could be written as
\begin{equation}
\bm\Sigma_i=\bm{V}_i\bm\Lambda_i\bm{V}_i^T,
\end{equation}
where the diagonal matrix $\bm\Lambda_i:=\Diag(\sigma_{i,1}^2,\cdots,\sigma_{i,n}^2)$ and $\sigma_{i,1}^2\geq\sigma_{i,2}^2\geq\cdots\geq \sigma_{i,n}^2\geq0$ are eigenvalues of $\bm\Sigma_i$.

Let
\begin{equation}
\bm{\tilde e}_i=\bm{V}_i^T\bm{e}_i,
\end{equation}
then
\begin{equation}
\bm{\tilde e}_i\thicksim\mathcal{N}(0,\bm\Lambda_i),\quad \|\bm{\tilde e}_i \|_2^2=\|\bm{e}_i\|_2^2.
\end{equation}
Denote the elements of vector $\bm{\tilde e}_i$ by
\begin{equation}
\bm{\tilde e}_i = [\tilde e_{i,1},\cdots,\cdots \tilde{e}_{i,n}]^T,
\end{equation}
then different $\tilde e_{i,j}$ are independent and satisfy
\begin{equation}\label{elemdist}
\tilde e_{i,j} \thicksim \mathcal{N}(0,\sigma_{i,j}^2),\quad 1\leq j \leq n.
\end{equation}
Now we stack all these vectors $\bm{\tilde e}_i, 1 \leq i \leq r$ into a single vector, i.e., we let
\begin{equation}
\bm{\tilde e} := [\bm{\tilde e}_1^T, \bm{\tilde e}_2^T, \cdots, \bm{\tilde e}_r^T]^T \in \mathbb{R}^{n\cdot{r}},
\end{equation}
then we have
\begin{equation}
\bm{\tilde e} \thicksim \mathcal{N}(0, \bm \Lambda), \quad \bm \Lambda = \Diag(\bm \Lambda_1, \cdots, \bm \Lambda_r) = \Diag(\sigma_{1,1}^2,\cdots,\sigma_{1,n}^2,\cdots,\sigma_{r,1}^2,\cdots,\sigma_{r,n}^2).
\end{equation}
Therefore, (\ref{FNorm}) is equivalent to
\begin{equation}
\|\bm E\|_F^2 = \sum_{i=1}^r  \|\bm{\tilde e}_i\|_2^2 = \|\bm{\tilde e}\|_2^2.
\end{equation}

It is well known that the norm of a Gaussian random vector will concentrate around its expectation \cite{ledoux2001concentration}. It has been proved that the norm of an i.i.d. Gaussian random vector will concentrate around its expectation (Chapter 4, \cite{PHDThesis}). The problem of the concentration of $\|\bm{\tilde e}\|_2^2$ here is only slightly different with the one in \cite{PHDThesis}. In particular, the elements of
$\bm{\tilde e}$ have different variances in our case. Therefore, the proof will be adapted from the proof of Theorem 4.2 in \cite{PHDThesis}. So only the different part will be given in the following proof.

Firstly, we have
\begin{equation}
\mathbb{E}\{\|\bm{\tilde e}\|_2^2\} = \sum_{i=1}^r\sum_{j=1}^n \sigma_{i,j}^2 = \sum_{i=1}^r\Trace(\bm \Sigma_i),
\end{equation}

Then follow the same approach as \cite{PHDThesis} and utilize Markov's Inequality. For any parameter $\beta>0$ and $\lambda>0$, we have
\begin{eqnarray}
\mathbb{P}\{\|\bm{\tilde e}\|_2^2 \geq \beta \sum_{i=1}^r \Trace(\bm \Sigma_i)\} &=& \mathbb{P}\{\exp(\lambda \|\bm{\tilde e}\|_2^2) \geq \exp(\lambda\beta \sum_{i=1}^r \Trace(\bm \Sigma_i)) \} \nonumber \\
&=& \prod_{i=1}^r \mathbb{P}\{\exp(\lambda \|\bm{\tilde e}_i\|_2^2) \geq \exp(\lambda\beta  \Trace(\bm \Sigma_i)) \} \nonumber \\
&\leq& \prod_{i=1}^r  \frac{\mathbb{E}\{\exp(\lambda \|\bm{\tilde e}_i\|_2^2)\}}{\exp(\lambda\beta \Trace(\bm \Sigma_i))} \nonumber \\
&=&  \prod_{i=1}^r \prod_{j=1}^n [\frac{\mathbb{E}\{\exp(\lambda \tilde e_{i,j}^2)\}}{\exp(\lambda\beta \sigma_{i,j}^2)}].
\end{eqnarray}
The moment generating function of the Gaussian random variable $\tilde e_{i,j}$ is:
\begin{equation}
\mathbb{E}\{\exp(\lambda \tilde e_{i,j}^2)\}=\frac{1}{\sqrt{1-2\lambda \sigma_{i,j}^2}},
\end{equation}
let
\begin{equation}
\sigma_{\max}:=\max_{i,j}\sigma_{i,j},\quad \sigma_{\min} := \min_{i,j}\sigma_{i,j},
\end{equation}
we have
\begin{equation}\label{Obj}
\mathbb{P}\{\|\bm{\tilde e}\|_2^2 \geq \beta \sum_{i=1}^r\Trace(\bm \Sigma_i)\}\leq\left(\frac{\exp(-2\lambda\beta \sigma_{\min}^2)}{1-2\lambda\sigma_{\max}^2}\right)^{r\cdot{n}/2},\quad\lambda>0,
\end{equation}

The rest of proof is the same as Theorem 4.2 in \cite{PHDThesis} and will be described briefly. Replacing $\lambda$ with its optimal value such that the right side of (\ref{Obj}) is minimized, and regarding some  formulas involving $\sigma_{\max}$ and $\sigma_{\min}$ for a constant $C$, we can derive the result of this lemma (which is also the result of Corollary 4.1 in \cite{PHDThesis} under i.i.d hypothesis):
\begin{eqnarray}
\mathbb{P}\{\|\bm{\tilde e}\|_2^2 \geq (1+\varepsilon)\sum_{i=1}^r\Trace(\bm \Sigma_i)\}\leq\exp (-\frac{r\cdot{n}\cdot\varepsilon^2}{C}), 
\end{eqnarray}
holds for any $0<\varepsilon<1$, where $C>0$ is a constant depending on $\sigma_{\max}$ and $\sigma_{\min}$.
\end{proof}

Next, the lemma will be presented to estimate the influence of the error between sample eigenvectors $\bm{\hat{Q}}^{(m)}$ and its true value on the volume-correlation computation. Motivated by the relation between volume and determinant, the matrix perturbation theory was utilized to derive the result needed.

\begin{Lemma}(Corollary 2.7 in \cite{ipsen2008perturbation})\label{perturb}
For the matrix $\bm A \in \mathbb{R}^{n \times n}$, and the perturbation matrix $\bm E \in \mathbb{R}^{n \times n}$, we have
\begin{itemize}
\item If $\bm A $ is full-rank, then
\begin{equation}\label{H0perturb}
|\det(\bm A+\bm E)-\det(\bm A)| \leq \sum_{i=1}^n s_{n-i}(\bm A)\|\bm E\|_2^i,
\end{equation}
\item If $\Rank(\bm A) = k$ for some $1\leq k \leq n-1$, then
\begin{equation}\label{H1perturb}
|\det(\bm A+\bm E)| \leq \|\bm E\|_2^{n-k} \sum_{i=0}^k s_{k-i}(\bm A)\|\bm E\|_2^i.
\end{equation}
here $s_{k}(\bm A)$ is defined as the $k$th elementary symmetric function of singular values of matrix $\bm{A}\in\mathbb{R}^{n\times{n}}$:
\begin{equation}\label{ESF}
s_{k}(\bm A) := \sum_{1 \leq i_1 \leq \cdots \leq i_k \leq n} \sigma_{i_1}\cdots \sigma_{i_k}, 1\leq k \leq n.
\end{equation}
\end{itemize}
\end{Lemma}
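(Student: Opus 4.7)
The plan is to reduce to the case of diagonal $\bm A$ via SVD, expand $\det(\bm A+\bm E)$ by column-multilinearity of the determinant, and bound each resulting term using a Hadamard-type operator-norm inequality applied to principal submatrices of $\bm E$.

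First I would take the SVD $\bm A=\bm U\bm\Sigma\bm V^T$ with $\bm\Sigma=\Diag(\sigma_1,\ldots,\sigma_n)$. Since $|\det(\bm U\bm V^T)|=1$ and $\|\bm U^T\bm E\bm V\|_2=\|\bm E\|_2$, both quantities $|\det(\bm A+\bm E)-\det(\bm A)|$ and $|\det(\bm A+\bm E)|$ are unchanged if we replace $\bm A$ by $\bm\Sigma$ and $\bm E$ by $\tilde{\bm E}:=\bm U^T\bm E\bm V$, while $s_j(\bm A)=s_j(\bm\Sigma)$ by definition. So it suffices to prove both assertions when $\bm A=\bm\Sigma$ is diagonal with nonnegative entries $\sigma_1,\ldots,\sigma_n$.

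Next I would use column-multilinearity of the determinant to write
\begin{equation*}
\det(\bm\Sigma+\tilde{\bm E})=\sum_{S\subseteq\{1,\ldots,n\}}\det(\bm M_S),
\end{equation*}
where $\bm M_S$ has $j$-th column equal to $\sigma_j\bm e_j$ for $j\in S$ and equal to the $j$-th column of $\tilde{\bm E}$ for $j\notin S$. Each of the columns indexed by $S$ has a single nonzero entry, so iterated Laplace expansion along those $|S|$ columns collapses cleanly to
\begin{equation*}
\det(\bm M_S)=\pm\Bigl(\prod_{j\in S}\sigma_j\Bigr)\det(\tilde{\bm E}_{S^c,S^c}),
\end{equation*}
where $\tilde{\bm E}_{S^c,S^c}$ denotes the principal submatrix of $\tilde{\bm E}$ indexed by $S^c$. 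Then I would invoke the Hadamard-type bound $|\det(\bm B)|\le\|\bm B\|_2^m$ for every $m\times m$ matrix $\bm B$, together with the submatrix inequality $\|\tilde{\bm E}_{S^c,S^c}\|_2\le\|\tilde{\bm E}\|_2=\|\bm E\|_2$, to obtain $|\det(\bm M_S)|\le\bigl(\prod_{j\in S}\sigma_j\bigr)\|\bm E\|_2^{n-|S|}$. Grouping a triangle-inequality sum by $|S|=j$ and recognizing $\sum_{|S|=j}\prod_{i\in S}\sigma_i=s_j(\bm A)$ yields the master estimate
\begin{equation*}
|\det(\bm A+\bm E)|\le\sum_{j=0}^{n}s_j(\bm A)\|\bm E\|_2^{n-j}.
\end{equation*}

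From the master estimate both assertions follow by elementary bookkeeping. In the full-rank case the $j=n$ term of the expansion equals $\det(\bm\Sigma)$ exactly (no perturbation columns appear in $\bm M_{[n]}$), so subtracting it from $\det(\bm\Sigma+\tilde{\bm E})$ before taking absolute values and reindexing $i=n-j$ gives (\ref{H0perturb}). In the rank-$k$ case, $s_j(\bm A)=0$ for $j>k$ because at least $n-k$ singular values vanish; factoring $\|\bm E\|_2^{n-k}$ out of the surviving terms and reindexing $i=k-j$ gives (\ref{H1perturb}). The main obstacle is establishing the clean combinatorial collapse $\det(\bm M_S)=\pm(\prod_{j\in S}\sigma_j)\det(\tilde{\bm E}_{S^c,S^c})$; this factorization genuinely depends on the diagonality of $\bm\Sigma$ and would fail for a general $\bm A$, which is precisely why the opening SVD reduction is essential. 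The remaining ingredients — column-multilinearity of $\det$, Hadamard's $|\det(\bm B)|\le\|\bm B\|_2^m$ bound, and the submatrix operator-norm inequality — are standard.
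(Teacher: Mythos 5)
Your proposal is correct, and in fact the paper contains no proof of this lemma at all: it is imported verbatim as Corollary~2.7 of \cite{ipsen2008perturbation}, so your argument serves as a self-contained derivation rather than a variant of anything in the appendix. Each step checks out: the SVD reduction is legitimate because $\det(\bm A+\bm E)-\det(\bm A)=\det(\bm U\bm V^T)\bigl(\det(\bm\Sigma+\tilde{\bm E})-\det(\bm\Sigma)\bigr)$ with $|\det(\bm U\bm V^T)|=1$, $\|\tilde{\bm E}\|_2=\|\bm E\|_2$ and $s_j(\bm A)=s_j(\bm\Sigma)$; the multilinear expansion into $2^n$ mixed matrices $\bm M_S$ is standard; the Laplace collapse $\det(\bm M_S)=\bigl(\prod_{j\in S}\sigma_j\bigr)\det(\tilde{\bm E}_{S^c,S^c})$ holds (the sign is exactly $+1$, since each deleted row--column pair is diagonal, so your cautious $\pm$ is harmless); and $|\det(\bm B)|\leq\|\bm B\|_2^m$ together with $\|\tilde{\bm E}_{S^c,S^c}\|_2\leq\|\bm E\|_2$ gives the master estimate, from which both cases follow by the bookkeeping you describe ($s_j(\bm A)=0$ for $j>k$ when $\Rank(\bm A)=k$, and the $S=\{1,\dots,n\}$ term is exactly $\det(\bm\Sigma)$ in the full-rank case). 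Compared with the route in \cite{ipsen2008perturbation}, which bounds the mixed-column determinants directly via singular-value inequalities for products and submatrices, your diagonalize-then-collapse device is more elementary and arguably cleaner; it also makes visible that the first bound (\ref{H0perturb}) needs no full-rank hypothesis, since the $S=\{1,\dots,n\}$ term is split off unconditionally. Two small notes: your subset formulation implicitly (and correctly) reads the paper's definition (\ref{ESF}) with strict inequalities $i_1<\cdots<i_k$, as the non-strict indexing printed there is a typo for the elementary symmetric function; and the convention $s_0(\bm A)=1$, which you use in the $j=0$ term of the master estimate, should be stated explicitly since (\ref{H1perturb}) relies on it at $i=k$.
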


Now we will prove theorem \ref{H1Thm1N}. Assume the sample data be
\begin{equation}
\bm{R}^{(m)}=[\bm{r}_1,\cdots,\bm{r}_m],
\end{equation}
and its sample correlation matrix be $\mathrm{R}_r^{(m)}=\frac{1}{m}\bm{R}^{(m)}(\bm{R}^{(m)})^T$, then the volume-correlation is of the form of:
\begin{equation}
V(\bm{R}^{(m)})=\Vol_{d_2+k_m}([\bm{Q}_S,\bm{\hat{Q}}^{(m)}]),
\end{equation}
where $\bm{Q}_S$ is the matrix with columns being the orthonormal basis vectors of target subspace $\bm{H}_S$ and $\bm{\hat{Q}}^{(m)}$ is the matrix with columns being the eigenvectors of $\mathrm{R}_r^{(m)}$ corresponding to its $k_m$ relatively large eigenvalues. Without loss of generality, let
\begin{equation}
  \bm{\hat{Q}}^{(m)}=[\bm{\hat{q}}_1,\cdots,\bm{\hat{q}}_{k_m}],
\end{equation}
We have
\begin{align}\label{detperturb}
V(\bm R^{(m)})&= \Vol_{d_2+k_m}([\bm{Q}_{S}, \bm{\hat{Q}}^{(m)}])\nonumber\\
&={\det}^{1/2}\left[\begin{array}{cc}
						\bm Q_{S}^T\bm Q_{S} & \bm Q_{S}^T\bm{\hat{Q}}^{(m)}\\
						(\bm{\hat{Q}}^{(m)})^T\bm Q_{S} & (\bm{\hat{Q}}^{(m)})^T\bm{\hat{Q}}^{(m)}
					\end{array}\right]\nonumber \\
&= {\det}^{1/2}\left(\bm Q_{S}^T\bm Q_{S}\right)\cdot
{\det}^{1/2}(\bm I_n - (\bm{\hat{Q}}^{(m)})^T\bm Q_{S}\left(\bm{Q}_{S}^T\bm{Q}_{S}\right)^{-1}\bm{Q}_{S}^T\bm{\hat{Q}}^{(m)})\nonumber\\
&= {\det}^{1/2}\left((\bm{\hat{Q}}^{(m)})^T\bm P_{S}^{\perp}\bm{\hat{Q}}^{(m)}\right).
\end{align}
where $\bm{P}_{S}^{\perp}$ is the projection matrix onto the orthogonal complement of target subspace $\bm{H}_S$.

We noticed that we need the approximation result of $\bm{\hat{Q}}^{(m)}$ according to lemma \ref{stoica} to obtain the conclusion of theorem \ref{H1Thm1N}. It is natural for using lemma \ref{stoica} to estimate the error of approximation. 
Firstly we express $\bm{\hat{Q}}^{(m)}$ as the following linear random perturbation model,
\begin{equation}
\bm{\hat{Q}}^{(m)} = \bm{Q}_{SC}+\bm{E}^{(m)},
\end{equation}
where $\bm{Q}_{SC}$ denotes the real orthogonal basis of signal subspace from $\mathrm{R}_{\bm r}$, and
\begin{equation}
 \bm E^{(m)} = [\bm e_1^{(m)},\cdots,\bm e_{k_m}^{(m)}],
\end{equation}
with $\bm e_i^{(m)}\thicksim \mathcal{N}(0,\bm \Sigma_i^{(m)})$, and ${\bm e_i^{(m)}}$ are mutually independent for different $i$. Then according to (\ref{perturbcov}), we have
\begin{equation}
\bm \Sigma_i^{(m)} = \frac{\lambda_i}{m} \Big[ \sum_{\stackrel{j=1}{j \neq i}}^{k_m} \frac{\lambda_j}{(\lambda_i-\lambda_j)^2} \bm q_j \bm q_j^T + \sum_{j=k_m+1}^{n}\frac{\sigma^2}{(\sigma^2-\lambda_i)^2}\bm q_j \bm q_j^T \Big].
\end{equation}
Therefore (\ref{detperturb}) becomes
\begin{align}\label{detperturb2}
V(\bm R^{(m)})
&={\det}^{1/2}\left((\bm{\hat{Q}}^{(m)})^T\bm P_{S}^{\perp}\bm{\hat{Q}}^{(m)}\right) \nonumber \\
&={\det}^{1/2}\left((\bm P_{S}^{\perp}\bm{Q}_{SC}+\bm P_{S}^{\perp}\bm E^{(m)})^T(\bm P_{S}^{\perp}\bm{Q}_{SC}+\bm P_{S}^{\perp}\bm E^{(m)})\right),
\end{align}
for simplicity, let
$$
\bm V = \bm P_{S}^{\perp}\bm{Q}_{SC}, \qquad \bm W = \bm P_{S}^{\perp}\bm E^{(m)},
$$
then (\ref{detperturb2}) becomes
\begin{equation}
T(\bm R^{(m)}) = \det\left((\bm V+ \bm W)^T(\bm V+ \bm W)\right),
\end{equation}
Let
$$
\bm A = \bm V^T \bm V, \qquad \bm E = \bm V^T \bm W + \bm W^T \bm V + \bm W^T \bm W,
$$
then we have
$$
V^2(\bm R^{(m)}) = \det(\bm A + \bm E),
$$
where
\begin{align}\label{RankA}
\bm{A}&=(\bm{Q}_{SC})^T\bm{P}_S^{\perp}\bm{Q}_{SC},\\
\bm{E}&=(\bm{Q}_{SC})^T\bm{P}_S^{\perp}\bm{E}^{(m)}+(\bm{E}^{(m)})^T\bm{P}_S^{\perp}\bm{Q}_{SC}+(\bm E^{(m)})^T\bm{P}_S^{\perp}\bm E^{(m)},
\end{align}

From lemma \ref{LemmaRMF}, we noted that its two conclusions were distinguished by the rank of matrix $\bm{A}$. It indicated that the rank of $\bm{A}$ was a critical factor for accuracy of approximation. In fact, it determined the infinitesimal order for error of approximation. So the rank of $\bm{A}$ should be analyzed.

According to (\ref{RankA}), we have
\begin{equation}\label{RankB}
  (\bm{Q}_{SC})^T\bm{P}_S^{\perp}\bm{Q}_{SC}=(\bm{Q}_{SC})^T(\bm{I}_n-\bm{P}_S)\bm{Q}_{SC}=\bm{I}_{k_m}-(\bm{Q}_{SC})^T\bm{P}_S\bm{Q}_{SC}.
\end{equation}
The rank of $\bm{A}$ is closely related to the rank of $(\bm{Q}_{SC})^T\bm{P}_S\bm{Q}_{SC}$, or more precisely, the structure of subspace spanned by $\bm{Q}_{SC}$. There are two possibilities for the structure of $\Span(\bm{Q}_{SC})$,
\begin{itemize}
\item If the target signal presents, then $\Dim(\Span(\bm{Q}_{SC})\cap\bm{H}_S)\neq0$ and $\Dim(\Span(\bm{Q}_{SC})\cap\bm{H}_C)\neq0$;
\item If the target signal doesn't present, then $\Dim(\Span(\bm{Q}_{SC})\cap\bm{H}_S)=0$ and $\Dim(\Span(\bm{Q}_{SC})\cap\bm{H}_C)\neq0$;
\end{itemize}

In the first case, because $\Span(\bm{Q}_{SC})\cap\bm{H}_S\neq\{0\}$, it is assumed that $k_m^S$ of $k_m$ columns of $\bm{Q}_{SC}$ were contributed by target subspace and the others came from clutter subspace. The corresponding result on the rank of $(\bm{Q}_{SC})^T\bm{P}_S^{\perp}\bm{Q}_{SC}$ can be summarized in the following lemmas
\begin{Lemma}
  Under the hypothesis of existence of target signal in sample data, we have
  \begin{equation}\label{Rank1}
    \Rank((\bm{Q}_{SC})^T\bm{P}_S^{\perp}\bm{Q}_{SC})=k_m-k_m^S,
  \end{equation}
  and all of its nonzero singular values (eigenvalues) are $1$.
\end{Lemma}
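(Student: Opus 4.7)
The plan is to show that $\bm{A} := \bm{Q}_{SC}^T\bm{P}_S^{\perp}\bm{Q}_{SC}$ is actually $\bm{I}_{k_m}$ minus an orthogonal projection of rank $k_m^S$, from which both assertions fall out immediately. The key structural fact to exploit is that, in the present case, the target subspace is contained in the signal-plus-clutter subspace: $\bm{H}_S\subset\bm{H}_S\oplus\bm{H}_C=\Span(\bm{Q}_{SC})$, which is precisely what the hypothesis ``target signal is present'' gives us.

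First I would expand
\begin{equation}
\bm{A}=\bm{Q}_{SC}^T(\bm{I}_n-\bm{P}_S)\bm{Q}_{SC}=\bm{I}_{k_m}-\bm{Q}_{SC}^T\bm{P}_S\bm{Q}_{SC},
\end{equation}
using the orthonormality $\bm{Q}_{SC}^T\bm{Q}_{SC}=\bm{I}_{k_m}$. Set $\bm{M}:=\bm{Q}_{SC}^T\bm{P}_S\bm{Q}_{SC}\in\mathbb{R}^{k_m\times k_m}$; the task reduces to showing that $\bm{M}$ is itself an orthogonal projection of rank $k_m^S$.

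Symmetry of $\bm{M}$ is immediate from symmetry of $\bm{P}_S$. For idempotency, note that $\bm{P}_{SC}:=\bm{Q}_{SC}\bm{Q}_{SC}^T$ is the projector onto $\Span(\bm{Q}_{SC})$, and because $\bm{H}_S\subset\Span(\bm{Q}_{SC})$ every vector $\bm{P}_S v$ already lies in $\Span(\bm{Q}_{SC})$, hence $\bm{P}_{SC}\bm{P}_S=\bm{P}_S$ and, by transposition, $\bm{P}_S\bm{P}_{SC}=\bm{P}_S$. Therefore
\begin{equation}
\bm{M}^2=\bm{Q}_{SC}^T\bm{P}_S\,\bm{P}_{SC}\,\bm{P}_S\bm{Q}_{SC}=\bm{Q}_{SC}^T\bm{P}_S^2\bm{Q}_{SC}=\bm{M},
\end{equation}
so $\bm{M}$ is an orthogonal projector. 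Its rank is recovered by trace:
\begin{equation}
\Rank(\bm{M})=\Trace(\bm{M})=\Trace(\bm{P}_S\bm{Q}_{SC}\bm{Q}_{SC}^T)=\Trace(\bm{P}_S\bm{P}_{SC})=\Trace(\bm{P}_S)=d_2=k_m^S,
\end{equation}
where the last step again uses $\bm{P}_S\bm{P}_{SC}=\bm{P}_S$ and $\Trace(\bm{P}_S)=\dim(\bm{H}_S)$.

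Finally, since $\bm{M}$ is an orthogonal projector of rank $k_m^S$ on $\mathbb{R}^{k_m}$, its eigenvalues consist of $k_m^S$ ones and $k_m-k_m^S$ zeros. Consequently $\bm{A}=\bm{I}_{k_m}-\bm{M}$ has eigenvalues equal to $k_m-k_m^S$ ones and $k_m^S$ zeros, which (by symmetry of $\bm{A}$) are also its singular values. This yields $\Rank(\bm{A})=k_m-k_m^S$ with all nonzero singular values equal to $1$, as claimed. There is no serious obstacle here; the only subtlety is recognizing that the hypothesis ``target is present'' is used exactly through the inclusion $\bm{H}_S\subset\Span(\bm{Q}_{SC})$ that powers the identity $\bm{P}_{SC}\bm{P}_S=\bm{P}_S$ and hence the idempotency of $\bm{M}$.
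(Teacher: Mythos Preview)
Your argument is correct and is cleaner than the paper's. Both proofs start from the same identity $\bm{A}=\bm{I}_{k_m}-\bm{Q}_{SC}^T\bm{P}_S\bm{Q}_{SC}$, but then diverge. The paper introduces an explicit factorization $\bm{Q}_{SC}=[\bm{\bar{Q}}_S,\bm{\bar{Q}}_S^{\perp}]\bm{B}^{(m)}$ with $\bm{B}^{(m)}$ orthogonal, computes the resulting $2\times 2$ block structure of $\bm{Q}_{SC}^T\bm{P}_S\bm{Q}_{SC}$ to obtain $(\bm{B}^{(m)})^T\mathrm{diag}(\bm{I}_{k_m^S},0)\bm{B}^{(m)}$, and then runs a small SVD argument on the bottom block $\bm{B}_2^{(m)}$ to read off the singular values. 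You instead observe abstractly that $\bm{M}=\bm{Q}_{SC}^T\bm{P}_S\bm{Q}_{SC}$ is itself an orthogonal projector---symmetry is immediate, and idempotency follows from the single identity $\bm{P}_{SC}\bm{P}_S=\bm{P}_S$ encoding $\bm{H}_S\subset\Span(\bm{Q}_{SC})$---and recover its rank via $\Trace(\bm{M})=\Trace(\bm{P}_S)=d_2=k_m^S$. Your route avoids the auxiliary basis $[\bm{\bar{Q}}_S,\bm{\bar{Q}}_S^{\perp}]$ and the SVD step entirely, and makes transparent exactly where the ``target present'' hypothesis enters (through the inclusion $\bm{H}_S\subset\Span(\bm{Q}_{SC})$, equivalently $k_m^S=d_2$); the paper's construction, while longer, has the minor advantage of exhibiting an explicit orthogonal conjugacy of $\bm{A}$ with $\mathrm{diag}(0,\bm{I}_{k_m-k_m^S})$.
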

\begin{proof}
It should be noted firstly that $k_m$ can not excess $d_1+d_2$ which is the intrinsic dimension of $\bm{H}_S\oplus\bm{H}_C$, and $k_m^S$ should not be larger than $d_2$ no matter how large the number $m$ of sample data is. Generically, when $m$ is sufficiently large, which is the case we consider here, we have $k_m = d_1 + d_2$ and $k_m^S = d_2$.

It is natural to assume $\bm{Q}_{SC}=[\bm{\bar{Q}}_S,\bm{\bar{Q}}_S^{\perp}]\bm{B}^{(m)}\in\mathbb{R}^{n\times{k_m}}$, where $\bm{\bar{Q}}_S\in\mathbb{R}^{n\times{k_m^{S}}}$ is a matrix with columns being parts of orthonormal basis vectors for $\bm{H}_S$, and $\bm{\bar{Q}}_S^{\perp}\in\mathbb{R}^{n\times(k_m-k_m^{S})}$ is a matrix with columns composed of vectors in $\mathbb{P}_S^{\perp}\bm{H}_C$. and $\bm{B}^{(m)}\in\mathbb{R}^{k_m\times{k_m}}$ is a orthogonal matrix. Hence from (\ref{RankB}) we have
\begin{align}
  (\bm{Q}_{SC})^T\bm{P}_S^{\perp}\bm{Q}_{SC}&=\bm{I}_{k_m}-(\bm{Q}_{SC})^T\bm{P}_S\bm{Q}_{SC}\nonumber\\
  &=\bm{I}_{k_m}-(\bm{B}^{(m)})^T\left[\begin{array}{c}\bm{\bar{Q}}_S^T\\(\bm{\bar{Q}}_S^{\perp})^T\end{array}\right]\bm{P}_S[\bm{\bar{Q}}_S,\bm{\bar{Q}}_S^{\perp}]\bm{B}^{(m)}\nonumber\\
  &=\bm{I}_{k_m}-(\bm{B}^{(m)})^T\left[
  \begin{array}{cc}
  \bm{\bar{Q}}_S^T\bm{P}_S\bm{\bar{Q}}_S&\bm{\bar{Q}}_S^T\bm{P}_S\bm{\bar{Q}}_S^{\perp}\nonumber\\
  (\bm{\bar{Q}}_S^{\perp})^T\bm{P}_S\bm{\bar{Q}}_S&(\bm{\bar{Q}}_S^{\perp})^T\bm{P}_S\bm{\bar{Q}}_S^{\perp}
  \end{array}
  \right]\bm{B}^{(m)}
\end{align}
because
\begin{equation}
  \bm{P}_S\bm{\bar{Q}}_S=\bm{\bar{Q}}_S,\qquad\bm{P}_S\bm{\bar{Q}}_S^{\perp}=0,
\end{equation}
we have
\begin{align}
  \bm{\bar{Q}}_S^T\bm{P}_S\bm{\bar{Q}}_S&=\bm{I}_{k_m^{S}},\nonumber\\
  \bm{\bar{Q}}_S^T\bm{P}_S\bm{\bar{Q}}_S^{\perp}
  &=(\bm{\bar{Q}}_S^{\perp})^T\bm{P}_S\bm{\bar{Q}}_S
  =(\bm{\bar{Q}}_S^{\perp})^T\bm{P}_S\bm{\bar{Q}}_S^{\perp}=0,
\end{align}
therefore
\begin{align}
  (\bm{Q}_{SC})^T\bm{P}_S^{\perp}\bm{Q}_{SC}
  &=\bm{I}_{k_m}-(\bm{B}^{(m)})^T\left[
  \begin{array}{cc}\bm{I}_{k_m^{S}}&0\\0&0\end{array}
  \right]\bm{B}^{(m)}\\
  &=(\bm{B}^{(m)})^T\left[
  \begin{array}{cc}0&0\\0&\bm{I}_{k_m-k_m^{S}}\end{array}
  \right]\bm{B}^{(m)}
\end{align}
Let
\begin{equation}
  \bm{B}^{(m)}=\left[\begin{array}{c}\bm{B}_1^{(m)}\\\bm{B}_2^{(m)}\end{array}\right],
  \quad{\bm{B}_1^{(m)}\in\mathbb{R}^{k_m^S\times{k_m}},\ \bm{B}_2^{(m)}\in\mathbb{R}^{(k_m-k_m^S)\times{k_m}}},
\end{equation}
then we have
\begin{equation}
  (\bm{Q}_{SC})^T\bm{P}_S^{\perp}\bm{Q}_{SC}=(\bm{B}_2^{(m)})^T\bm{B}_2^{(m)},
\end{equation}
assume the singular value decomposition of $\bm{B}_2^{(m)}$ be
\begin{equation}
  \bm{B}_2^{(m)}=\bar{\bm{U}}[\bm{\Lambda},\bm{0}]\bar{\bm{V}},
\end{equation}
where $\bar{\bm{U}}\in\mathbb{R}^{(k_m-k_m^S)\times(k_m-k_m^S)}$ and $\bar{\bm{V}}\in\mathbb{R}^{k_m\times{k_m}}$ are orthogonal matrices and $\bm{\Lambda}$ is diagonal matrix. because of the orthogonality of $\bm{B}^{(m)}$,
\begin{equation}
  \bm{B}_2^{(m)}(\bm{B}_2^{(m)})^T=\bm{\bar{U}}\bm{\Lambda}^2\bm{\bar{U}}^T=\bm{I}_{k_m-k_m^S},
\end{equation}
hence we have $\bm{\Lambda}=\bm{I}_{k_m-k_m^S}$ and
\begin{equation}
  (\bm{B}_2^{(m)})^T\bm{B}_2^{(m)}=\bm{\bar{V}}\left[\begin{array}{cc}\bm{I}_{k_m-k_m^S}&\bm{0}\\\bm{0}&\bm{0}\end{array}\right]\bm{\bar{V}}^T,
\end{equation}
Then the following conclusion could be drawn: If target signal presents in sample data, then we have
\begin{equation}
  \Rank((\bm{Q}_{SC})^T\bm{P}_S^{\perp}\bm{Q}_{SC})=k_m-k_m^S,
\end{equation}
and all of its non-zero singular values (eigenvalues) are $1$.
\end{proof}

On the other hand, when there is no target but only clutter in the sample data, another lemma should hold.
\begin{Lemma}\label{lemma3}
   Under the hypothesis of non-existence of target signal in sample data, we have
  \begin{equation}
    \Rank((\bm{Q}_{SC})^T\bm{P}_S^{\perp}\bm{Q}_{SC})=k_m,
  \end{equation}
  and all of its nonzero singular values (eigenvalues) are $1$.
\end{Lemma}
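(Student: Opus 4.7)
The plan is to mirror the structure of the proof of the preceding lemma. Under the non-target hypothesis, $\Span(\bm{Q}_{SC})=\bm{H}_C$, and generically (for $m$ large enough that sampling has explored all of $\bm{H}_C$) we have $k_m=d_1$. Starting from the identity
\begin{equation}
(\bm{Q}_{SC})^T\bm{P}_S^{\perp}\bm{Q}_{SC}=\bm{I}_{k_m}-(\bm{Q}_{SC})^T\bm{P}_S\bm{Q}_{SC}=(\bm{P}_S^{\perp}\bm{Q}_{SC})^T(\bm{P}_S^{\perp}\bm{Q}_{SC}),
\end{equation}
the rank assertion reduces to showing that the linear map $\bm{x}\mapsto\bm{P}_S^{\perp}\bm{Q}_{SC}\bm{x}$ is injective.

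First I would invoke the trivial-intersection hypothesis $\bm{H}_S\cap\bm{H}_C=\{0\}$: if $\bm{P}_S^{\perp}\bm{Q}_{SC}\bm{x}=\bm 0$ then $\bm{Q}_{SC}\bm{x}\in\bm{H}_S$; but $\bm{Q}_{SC}\bm{x}\in\Span(\bm{Q}_{SC})=\bm{H}_C$, so $\bm{Q}_{SC}\bm{x}\in\bm{H}_S\cap\bm{H}_C=\{0\}$, and orthonormality of the columns of $\bm{Q}_{SC}$ then forces $\bm{x}=\bm 0$. Consequently $(\bm{Q}_{SC})^T\bm{P}_S^{\perp}\bm{Q}_{SC}$ is positive definite, which yields $\Rank((\bm{Q}_{SC})^T\bm{P}_S^{\perp}\bm{Q}_{SC})=k_m$.

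For the spectral statement I would appeal to the principal-angle interpretation summarized in Section II. Writing $\theta_1,\ldots,\theta_{d_1}$ for the principal angles between $\bm{H}_S$ and $\bm{H}_C$, the eigenvalues of $(\bm{Q}_{SC})^T\bm{P}_S\bm{Q}_{SC}$ are $\cos^2\theta_j$, so those of $(\bm{Q}_{SC})^T\bm{P}_S^{\perp}\bm{Q}_{SC}=\bm{I}_{k_m}-(\bm{Q}_{SC})^T\bm{P}_S\bm{Q}_{SC}$ are $\sin^2\theta_j$, all strictly positive precisely because $\bm{H}_S\cap\bm{H}_C=\{0\}$ forces every $\theta_j>0$. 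Equivalently, as in the previous lemma, one can write $\bm{Q}_{SC}=[\bm{\bar Q}_S,\bm{\bar Q}_S^{\perp}]\bm{B}^{(m)}$ but now with an empty $\bm{\bar Q}_S$-block, so that the block-diagonal reduction collapses to a single identity block of size $k_m$; this gives an algebraic derivation parallel to the target-present case.

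The delicate point, which I expect to be the main obstacle, is reconciling these eigenvalues, which in general equal $\sin^2\theta_j(\bm{H}_S,\bm{H}_C)$, with the stated strengthening that every nonzero eigenvalue equals $1$. That conclusion would require $\bm{H}_S\perp\bm{H}_C$, i.e.\ $\theta_j=\pi/2$ for all $j$, which is strictly stronger than $\dim(\bm{H}_S\cap\bm{H}_C)=0$. I would therefore either restate the conclusion as "all eigenvalues are strictly positive, equal to $\sin^2\theta_j(\bm{H}_S,\bm{H}_C)$" and trace whether downstream uses in Theorem~\ref{H1Thm1N} — in particular the perturbation coefficient $s_{d_1-1}(\bm{Q}_C^T\bm{P}_S^{\perp}\bm{Q}_C)$ appearing in (\ref{Res10}) — only rely on the full-rank property (in which case the weaker conclusion suffices), or impose explicit orthogonality wherever the unit-eigenvalue claim is actually invoked.
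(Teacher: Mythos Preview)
Your rank argument is essentially identical to the paper's: both verify that $(\bm{Q}_{SC})^T\bm{P}_S^{\perp}\bm{Q}_{SC}\bm{x}=\bm 0$ forces $\bm{x}=\bm 0$ via the chain $\bm{P}_S^{\perp}\bm{Q}_{SC}\bm{x}=\bm 0\Rightarrow\bm{Q}_{SC}\bm{x}\in\bm{H}_S\cap\bm{H}_C=\{0\}\Rightarrow\bm{x}=\bm 0$, using $\Span(\bm{Q}_{SC})=\bm{H}_C$ (the non-target hypothesis) and the orthonormality of the columns of $\bm{Q}_{SC}$. The paper spells this out as a string of equivalences exploiting idempotence of $\bm{P}_S^{\perp}$, but the content is the same.

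On the second claim, the paper does \emph{not} give an argument: it writes only that ``there will be a similar proof as the proof in Lemma~6, so we won't repeat the proof here.'' Your analysis is in fact sharper than the paper's. The block decomposition used in the preceding lemma places the non-signal columns of $\bm{Q}_{SC}$ in $\bm{P}_S^{\perp}\bm{H}_C$, not in $\bm{H}_C$; transplanting that argument verbatim to the non-target case (where $\Span(\bm{Q}_{SC})=\bm{H}_C$) is exactly the step that fails unless $\bm{H}_S\perp\bm{H}_C$. Your principal-angle computation showing the eigenvalues are $\sin^2\theta_j(\bm{H}_S,\bm{H}_C)$ is the correct general statement. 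You are also right about the downstream impact: the non-target branch of Theorem~\ref{H1Thm1N} invokes only the full-rank conclusion (to apply the first case of Lemma~\ref{perturb}) and leaves the coefficient as $s_{d_1-1}(\bm{Q}_C^T\bm{P}_S^{\perp}\bm{Q}_C)$ in terms of the actual singular values, so the unit-eigenvalue claim is never used there.
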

\begin{proof}
It is noted that when the sample data $\bm{R}^{(m)}$ contains no target signal, We have $k_m=d_2$ when $m$ is sufficiently large. It is obvious that $\bm{Q}_{SC}$ is full rank, now
we should verify that $(\bm{Q}_{SC})^T\bm{P}_S^{\perp}\bm{Q}_{SC}$ is of full rank. That is,
\begin{equation}
  (\bm{Q}_{SC})^T
  \bm{P}_S^{\perp}\bm{Q}_{SC}\bm x= \bm 0\Longleftrightarrow{\bm x=\bm 0},
\end{equation}
In fact, we have
\begin{align}
  (\bm{Q}_{SC})^T\bm{P}_S^{\perp}\bm{Q}_{SC}\bm x=\bm 0&\Longleftrightarrow{\bm x^T(\bm{Q}_{SC})^T\bm{P}_S^{\perp}\bm{Q}_{SC}\bm x=0}\nonumber\\
  &\Longleftrightarrow{\bm x^T(\bm{Q}_{SC})^T\bm{P}_S^{\perp}\bm{P}_S^{\perp}\bm{Q}_{SC}\bm x=0}\qquad(\bm{P}_S^{\perp}\ \textrm{is idempotent matrix})\nonumber\\
  &\Longleftrightarrow\bm{P}_S^{\perp}\bm{Q}_{SC}\bm x=\bm 0\nonumber\\
  &\Longleftrightarrow\bm{Q}_{SC}\bm x\in\bm{H}_S\cap\bm{H}_C \quad (\text{because now} \Span(\bm Q_{SC}=\bm H_C))\nonumber\\
  &\Longleftrightarrow\bm{Q}_{SC}\bm x=\bm 0 \quad (\text{because} \dim(\bm H_S \cap \bm H_C)=0)\nonumber\\
  &\Longleftrightarrow{\bm x=\bm 0}\qquad(\bm{Q}_{SC}\ \textrm{is of full rank})
\end{align}
As for the proof about singular values, there will be a similar proof as the proof in Lemma 6, so we won't repeat the proof here.
\end{proof}
Now we continue the proof of Theorem \ref{H1Thm1N}.

According to (\ref{H1perturb}) in Lemma \ref{perturb} and (\ref{RankA}), we have
\begin{align}\label{OH1}
T(\bm{R}^{(m)})=&\det(\bm A + \bm E)\nonumber \\
\leq&\|\bm E\|_2^{k_m-k_m+k_m^S}\sum_{i=0}^{k_m-k_m^S}s_{k_m-k_m^S-i}(\bm{A})\|\bm E\|_2^i, \nonumber \\
=&s_{k_m-k_m^S}(\bm A)\|\bm E\|_2^{k_m^S}+O(\|\bm E\|_2^{k_m^S+1}),
\end{align}
where
\begin{eqnarray}\label{perturbE}
 \|\bm E\|_2 &=&  \| (\bm{Q}_{SC})^T \bm{P}_S^{\perp}\bm E^{(m)} + (\bm E^{(m)})^T\bm{P}_S^{\perp} \bm{Q}_{SC} +  (\bm E^{(m)})^T \bm{P}_S^{\perp}\bm E^{(m)}\|_2 \nonumber \\
& \leq & 2 \|(\bm{Q}_{SC})^T \bm{P}_S^{\perp}\bm E^{(m)}\|_2 + \|\bm{P}_S^{\perp}\bm E^{(m)}\|_2^2 \nonumber \\
&\leq & 2 \|\bm{Q}_{SC}\|_2 \|\bm{P}_S^{\perp}\bm E^{(m)}\|_2 + \|\bm{P}_S^{\perp}\bm E^{(m)}\|_2^2 \nonumber \\
&=& 2  \|\bm{P}_S^{\perp}\bm E^{(m)}\|_2 + \|\bm{P}_S^{\perp}\bm E^{(m)}\|_2^2 \nonumber \\
&\leq& 2  \|\bm{P}_S^{\perp}\bm E^{(m)}\|_F + \|\bm{P}_S^{\perp}\bm E^{(m)}\|_F^2.
\end{eqnarray}
\par Then, according to Lemma \ref{stoica} and Lemma \ref{LemmaRMF}, we have
\begin{equation}
\bm{P}_S^{\perp}\bm E^{(m)} \thicksim \mathcal{N}(0, \bm{P}_S^{\perp} \bm \Sigma_i^{(m)}(\bm{P}_S^{\perp})^T),
\end{equation}
and for any $\varepsilon >0$
\begin{equation}\label{ProbIneq}
\|\bm{P}_S^{\perp}\bm E^{(m)}\|_F^2 \leq (1+\varepsilon)\sum_{i=1}^{k_m} \Trace(\bm{P}_S^{\perp}\bm \Sigma_i(\bm{P}_S^{\perp})^T),
\end{equation}
holds with probability
\begin{equation}
\mathbb{P} \geq 1- \exp\{-\frac{k_m\cdot{n}\cdot\varepsilon^2}{C}\} . \nonumber
\end{equation}

Consider the right side of (\ref{ProbIneq}), we have
\begin{eqnarray}
\lefteqn{\sum_{i=1}^{k_m}  \Trace(\bm{P}_S^{\perp}\bm \Sigma_i\bm{P}_S) } \nonumber \\
&&= \sum_{i=1}^{k_m} \left(\frac{1}{m} \Big( \sum_{\stackrel{j=1}{j\neq i}}^{k_m} \frac{\lambda_i \lambda_j}{(\lambda_i-\lambda_j)^2} \Trace(\bm{P}_S^{\perp} \bm q_j \bm q_j^T \bm{P}_S^{\perp T}) + \sum_{j=k_m+1}^{n}\frac{\lambda_i\sigma^2}{(\sigma^2-\lambda_i)^2} \Trace( \bm{P}_S^{\perp} \bm q_j \bm q_j^T \bm{P}_S^{\perp T}) \Big)\right),
\end{eqnarray}
because
\begin{equation}
\Trace(\bm{P}_S^{\perp} \bm q_j \bm q_j^T \bm{P}_S^{\perp T}) = \Trace( \bm q_j^T \bm{P}_S^{\perp} \bm q_j) \leq 1,
\end{equation}
we have
\begin{equation}\label{TraceIneq}
\sum_{i=1}^{k_m}\Trace(\bm{P}_S^{\perp}\bm\Sigma_i\bm{P}_S)\leq\frac{1}{m}\left(\sum_{i=1}^{k_m} \sum_{\stackrel{j=1}{j\neq i}}^{k_m} \frac{\lambda_i \lambda_j}{(\lambda_i-\lambda_j)^2} +\sum_{i=1}^{k_m}(n-k_m)\frac{\lambda_i\sigma^2}{(\sigma^2-\lambda_i)^2}\right).
\end{equation}
Combine (\ref{TraceIneq}) and (\ref{ProbIneq}), we have for any $\varepsilon>0$ and $0 \leq \delta < 1$, if
\begin{equation}
\frac{1}{m}\left(  \sum_{i =1}^{k_m} \sum_{\stackrel{j=1}{j\neq i}}^{k_m} \frac{\lambda_i \lambda_j}{(\lambda_i-\lambda_j)^2}  + \sum_{i =1}^{k_m} (n-k_m)\frac{\lambda_i\sigma^2}{(\sigma^2-\lambda_i)^2} \right) \leq \frac{(\sqrt{\delta+1}-1)^2}{1+\varepsilon},
\end{equation}
or equivalently,
\begin{equation}
m \geq\frac{1+\varepsilon}{(\sqrt{\delta+1}-1)^2}\left( \sum_{i =1}^{k_m} \sum_{\stackrel{j=1}{j\neq i}}^{k_m} \frac{\lambda_i \lambda_j}{(\lambda_i-\lambda_j)^2}  + \sum_{i =1}^{k_m} (n-k_m)\frac{\lambda_i\sigma^2}{(\sigma^2-\lambda_i)^2} \right),
\end{equation}
then
\begin{equation}\label{concent1}
\|\bm{P}_S^{\perp}\bm E^{(m)}\|_F^2 \leq (\sqrt{\delta+1}-1)^2,
\end{equation}
holds with probability
\begin{equation}
\mathbb{P} \geq 1- \exp\{-\frac{k_m\cdot{n}\cdot\varepsilon^2}{C}\} . \nonumber
\end{equation}
Then combining (\ref{perturbE}) with (\ref{concent1}), we get
\begin{equation}
\|\bm E\|_2\leq 2(\sqrt{\delta+1}-1)+(\sqrt{\delta+1}-1)^2={\delta},
\end{equation}
thus we have
\begin{equation}
T(\bm R^{(m)})\leq s_{k_m-k_m^S}((\bm{Q}_{SC})^T\bm{P}_S^{\perp}\bm{Q}_{SC})\delta^{k_m^S} + O(\delta^{k_m^S+1}),
\end{equation}
holds with overwhelming probability.

Furthermore, according to the definition of elementary symmetric function of singular values in (\ref{ESF}) and (\ref{Rank1}), it can be verified easily that
\begin{equation}
  s_{k_m-k_m^S}((\bm{Q}_{SC})^T\bm{P}_S^{\perp}\bm{Q}_{SC})=1,
\end{equation}
hence we have
\begin{equation}
  T(\bm R^{(m)})\leq \delta^{k_m^S} + O(\delta^{k_m^S+1}),
\end{equation}

If the number $m$ of sample data is large sufficiently, then we have
\begin{equation}
  k_m=d_1+d_2,\qquad{k_m^S=d_2},
\end{equation}
therefore
\begin{equation}
  \frac{1}{|T^2(\bm R^{(m)})|}\leq \delta^{d_2} + O(\delta^{d_2+1}),
\end{equation}

On the other hand, using (\ref{H0perturb}) in lemma \ref{perturb}, we can similarly obtain the corresponding result for non-target scenario.
\begin{equation}\label{OH0}
|\det(\bm A+\bm E)-\det(\bm A)| \leq  s_{k_m-1}(\bm A)\|\bm E\|_2 + O(\|\bm E\|_2^2),
\end{equation}
When $m$ is large sufficiently, we have $k_m=d_1$ and $\bm{Q}_{SC}=\bm{Q}_C$. According to (\ref{detperturb}), we have
\begin{equation}
\det((\bm{Q}_{SC})^T\bm{P}_S^{\perp}\bm{Q}_{SC})=\Vol_{d_1+d_2}^2([\bm{Q}_S,\bm{Q}_C]) := \tau^2(\bm{H}_S,\bm{H}_C),
\end{equation}
Similar to discussion above, for any $0 \leq \delta <1$ and $\varepsilon >0$, if
\begin{equation}
m \geq\frac{1+\varepsilon}{(\sqrt{\delta+1}-1)^2}\left( \Big( \sum_{i =1}^{d_1} \sum_{\stackrel{j=1}{j\neq i}}^{d_1} \frac{\lambda_i \lambda_j}{(\lambda_i-\lambda_j)^2}+\sum_{i =1}^{d_1}(n-d_1)\frac{\lambda_i\sigma^2}{(\sigma^2-\lambda_i)^2}  \Big)\right),
\end{equation}
then
\begin{equation}
\|\bm{P}_S^{\perp}\bm E^{(m)}\|_F^2 \leq (\sqrt{\delta+1}-1)^2,
\end{equation}
hold with probability
\begin{equation}
\mathbb{P} \geq 1-\exp\{-\frac{d_1\cdot{n}\cdot \varepsilon^2}{C}\}
\end{equation}
therefore
\begin{equation}
|T(\bm R^{(m)})-\tau^2(\bm{H}_S,\bm{H}_C)|\leq s_{d_1-1}(\bm{Q}_C^T\bm{P}_S^{\perp}\bm{Q}_C)\delta + O(\delta^{2}).
\end{equation}

\bibliography{IEEEabrv,CompressiveSensingMultiuserDetection}

\begin{thebibliography}{10}
\providecommand{\url}[1]{#1}
\csname url@samestyle\endcsname
\providecommand{\newblock}{\relax}
\providecommand{\bibinfo}[2]{#2}
\providecommand{\BIBentrySTDinterwordspacing}{\spaceskip=0pt\relax}
\providecommand{\BIBentryALTinterwordstretchfactor}{4}
\providecommand{\BIBentryALTinterwordspacing}{\spaceskip=\fontdimen2\font plus
\BIBentryALTinterwordstretchfactor\fontdimen3\font minus
  \fontdimen4\font\relax}
\providecommand{\BIBforeignlanguage}[2]{{%
\expandafter\ifx\csname l@#1\endcsname\relax
\typeout{** WARNING: IEEEtran.bst: No hyphenation pattern has been}%
\typeout{** loaded for the language `#1'. Using the pattern for}%
\typeout{** the default language instead.}%
\else
\language=\csname l@#1\endcsname
\fi
#2}}
\providecommand{\BIBdecl}{\relax}
\BIBdecl

\bibitem{KayV1}
S.~M. Kay, \emph{{Fundamentals of Statistical Signal Processing, Volume I:
  Estimation Theory }}.\hskip 1em plus 0.5em minus 0.4em\relax Prentice Hall,
  Apr. 1993.

\bibitem{KayV2}
------, \emph{{Fundamentals of Statistical Signal Processing, Volume 2:
  Detection Theory}}.\hskip 1em plus 0.5em minus 0.4em\relax {Prentice Hall
  PTR}, 1998.

\bibitem{scharf1994matched}
L.~L. Scharf and B.~Friedlander, ``Matched subspace detectors,'' \emph{Signal
  Processing, IEEE Transactions on}, vol.~42, no.~8, pp. 2146--2157, 1994.

\bibitem{scharf1996adaptive}
L.~L. Scharf and L.~T. McWhorter, ``Adaptive matched subspace detectors and
  adaptive coherence estimators,'' in \emph{Signals, Systems and Computers,
  1996. Conference Record of the Thirtieth Asilomar Conference on}.\hskip 1em
  plus 0.5em minus 0.4em\relax IEEE, 1996, pp. 1114--1117.

\bibitem{kraut1999cfar}
S.~Kraut and L.~L. Scharf, ``The {CFAR} adaptive subspace detector is a
  scale-invariant {GLRT},'' \emph{Signal Processing, IEEE Transactions on},
  vol.~47, no.~9, pp. 2538--2541, 1999.

\bibitem{kraut2001adaptive}
S.~Kraut, L.~L. Scharf, and L.~T. McWhorter, ``Adaptive subspace detectors,''
  \emph{Signal Processing, IEEE Transactions on}, vol.~49, no.~1, pp. 1--16,
  2001.

\bibitem{desai2003robust}
M.~N. Desai and R.~S. Mangoubi, ``Robust gaussian and non-gaussian matched
  subspace detection,'' \emph{Signal Processing, IEEE Transactions on},
  vol.~51, no.~12, pp. 3115--3127, 2003.

\bibitem{mcwhorter2003matched}
T.~McWhorter and L.~Scharf, \emph{Matched subspace detectors for stochastic
  signals}.

\bibitem{absil2004riemannian}
P.-A. Absil, R.~Mahony, and R.~Sepulchre, ``Riemannian geometry of grassmann
  manifolds with a view on algorithmic computation,'' \emph{Acta Applicandae
  Mathematica}, vol.~80, no.~2, pp. 199--220, 2004.

\bibitem{miao1992principal}
J.~Miao and A.~Ben-Israel, ``On principal angles between subspaces in {$\mathbb
  R^n$},'' \emph{Linear Algebra and its Applications}, vol. 171, pp. 81--98,
  1992.

\bibitem{qiu2005unitarily}
L.~Qiu, Y.~Zhang, and C.-K. Li, ``Unitarily invariant metrics on the grassmann
  space,'' \emph{SIAM journal on matrix analysis and applications}, vol.~27,
  no.~2, pp. 507--531, 2005.

\bibitem{edelman1998geometry}
A.~Edelman, T.~A. Arias, and S.~T. Smith, ``The geometry of algorithms with
  orthogonality constraints,'' \emph{SIAM journal on Matrix Analysis and
  Applications}, vol.~20, no.~2, pp. 303--353, 1998.

\bibitem{ben1992volume}
A.~Ben-Israel, ``A volume associated with {$m \times n$} matrices,''
  \emph{Linear Algebra and its Applications}, vol. 167, pp. 87--111, 1992.

\bibitem{akaike1974new}
H.~Akaike, ``A new look at the statistical model identification,''
  \emph{Automatic Control, IEEE Transactions on}, vol.~19, no.~6, pp. 716--723,
  1974.

\bibitem{wax1985detection}
M.~Wax and T.~Kailath, ``Detection of signals by information theoretic
  criteria,'' \emph{Acoustics, Speech and Signal Processing, IEEE Transactions
  on}, vol.~33, no.~2, pp. 387--392, 1985.

\bibitem{schwarz1978estimating}
G.~Schwarz \emph{et~al.}, ``Estimating the dimension of a model,'' \emph{The
  annals of statistics}, vol.~6, no.~2, pp. 461--464, 1978.

\bibitem{lu2013generalized}
Z.~Lu and A.~M. Zoubir, ``Generalized bayesian information criterion for source
  enumeration in array processing,'' \emph{Signal Processing, IEEE Transactions
  on}, vol.~61, no.~6, pp. 1470--1480, 2013.

\bibitem{kritchman2009non}
S.~Kritchman and B.~Nadler, ``Non-parametric detection of the number of
  signals: Hypothesis testing and random matrix theory,'' \emph{Signal
  Processing, IEEE Transactions on}, vol.~57, no.~10, pp. 3930--3941, 2009.

\bibitem{asadi2013source}
H.~Asadi and B.~Seyfe, ``Source number estimation via entropy estimation of
  eigenvalues (eee) in gaussian and non-gaussian noise,'' \emph{arXiv preprint
  arXiv:1311.6051}, 2013.

\bibitem{Stoica1991Statistical}
P.~Stoica and A.~Nehorai, ``{statistical efficiency study of direction
  estimation methods, part i: analysis of MUSIC and preliminary study of
  MLM},'' in \emph{Advances in Spectrum Analysis and Array Processing-Vol.II},
  S.~Haykin, Ed.\hskip 1em plus 0.5em minus 0.4em\relax Englewood Cliffs, USA:
  Prentice-Hall, 1991, pp. 263--306.

\bibitem{stoica1989music}
P.~Stoica and N.~Arye, ``Music, maximum likelihood, and cramer-rao bound,''
  \emph{Acoustics, Speech and Signal Processing, IEEE Transactions on},
  vol.~37, no.~5, pp. 720--741, 1989.

\bibitem{jeffries1985asymptotic}
D.~Jeffries and D.~Farrier, ``Asymptotic results for eigenvector methods,'' in
  \emph{IEE Proceedings F (Communications, Radar and Signal Processing)}, vol.
  132, no.~7.\hskip 1em plus 0.5em minus 0.4em\relax IET, 1985, pp. 589--594.

\bibitem{Ramirez2010}
D.~Ram{\'\i}rez, J.~V{\'\i}a, I.~Santamar{\'\i}a, and L.~L. Scharf, ``Detection
  of spatially correlated {Gaussian} time series,'' \emph{{IEEE} Trans. Signal
  Process.}, vol.~58, no.~10, pp. 5006--5015, 2010.

\bibitem{klausner2014detection}
N.~Klausner, M.~R. Azimi-Sadjadi, and L.~L. Scharf, ``Detection of spatially
  correlated time series from a network of sensor arrays,'' \emph{Signal
  Processing, IEEE Transactions on}, vol.~62, no.~6, pp. 1396--1407, 2014.

\bibitem{scharf2000canonical}
L.~L. Scharf and C.~T. Mullis, ``Canonical coordinates and the geometry of
  inference, rate, and capacity,'' \emph{IEEE Transactions on Signal
  Processing}, vol.~48, no.~3, pp. 824--831, 2000.

\bibitem{watkin1994optimal}
T.~Watkin and J.-P. Nadal, ``Optimal unsupervised learning,'' \emph{Journal of
  Physics A: Mathematical and General}, vol.~27, no.~6, p. 1899, 1994.

\bibitem{paul2007asymptotics}
D.~Paul, ``Asymptotics of sample eigenstructure for a large dimensional spiked
  covariance model,'' \emph{Statistica Sinica}, vol.~17, no.~4, p. 1617, 2007.

\bibitem{paul2013random}
D.~Paul and A.~Aue, ``Random matrix theory in statistics: a review,''
  \emph{Journal of Statistical Planning and Inference}, 2013.

\bibitem{ledoux2001concentration}
M.~Ledoux, \emph{The Concentration of Measure Phenomenon}, ser. Mathematical
  Surveys and Monographs.\hskip 1em plus 0.5em minus 0.4em\relax American
  Mathematical Society, 2001.

\bibitem{PHDThesis}
M.~A. Davenport, ``Random observations on random observations: Sparse signal
  acquisition and processing,'' thesis, Rice University, 2010.

\bibitem{ipsen2008perturbation}
I.~C. Ipsen and R.~Rehman, ``Perturbation bounds for determinants and
  characteristic polynomials,'' \emph{SIAM Journal on Matrix Analysis and
  Applications}, vol.~30, no.~2, pp. 762--776, 2008.

\end{thebibliography}
\bibliographystyle {IEEEtran}
\end{document}